\newtheorem{theorem}{Lemma}
\newtheorem{theorem1}{Corollary}
\begin{document}
%
\renewcommand{\thefootnote}{\fnsymbol{footnote}}
\title{Trade Reliability for Security: Leakage-Failure Probability Minimization for Machine-Type Communications in URLLC  
\thanks{This work was supported in part by  the National Key R\&D Program of China under Grant 2021YFB2900301, in part by 
the German Research
Council (DFG) through the basic research project under Grant number DFG SCHM 2643/17, 
in part by 
Federal Ministry of Education and Research Germany in the program of ''Souverän. Digital. Vernetzt.'' Joint Project 6G-RIC with project identification number 16KISK028, 6G-Life with project identification number 16KISK001K, 6G-ANNA with project identification number 16KISK097 and 16KISK103.

X. Yuan, Y. Zhu and Y. Hu are with School of Electronic Information, Wuhan University, 430072 Wuhan, China and   Chair of Information Theory and Data Analytics, RWTH Aachen University, 52074 Aachen, Germany. (Email:$yuan|zhu|hu$@inda.rwth-aachen.de). 
R. F. Schaefer is with the Chair of Information Theory and Machine Learning, Technische Universität Dresden, 01062 Dresden, Germany (Email: rafael.schaefer@tu-dresden.de). 
A. Schmeink is with the Chair of Information Theory and Data Analytics, RWTH Aachen University, 52074 Aachen, Germany (Email: schmeink@inda.rwth-aachen.de). 
*Y. Hu is the corresponding  author. } }
\author{
Yao Zhu, 
Xiaopeng Yuan, 
Yulin Hu$^*$, 
Rafael F. Schaefer,
Anke~Schmeink 
}
\maketitle
    
\vspace{-5pt}
\begin{abstract}
How to provide information security while fulfilling ultra reliability and low-latency requirements is one of the major concerns for enabling the next generation of ultra-reliable and low-latency communications service (xURLLC), specially in machine-type communications. 
In this work, we investigate the reliability-security tradeoff via defining the  leakage-failure probability, which is a metric that jointly characterizes both reliability and security performances for short-packet transmissions. We discover that the system performance can be enhanced by counter-intuitively allocating fewer resources for the transmission with finite blocklength (FBL) codes. 
In order to solve the corresponding optimization problem for the joint resource allocation, we propose an optimization framework, that leverages lower-bounded approximations for the decoding error probability in the FBL regime. We characterize the convexity of the reformulated problem and establish an efficient iterative searching method, the convergence of which is guaranteed. To show the extendability of the framework, we further discuss the blocklength allocation schemes with  practical requirements of reliable-secure performance, as well as the transmissions with the statistical channel state information (CSI). 
Numerical results verify the accuracy of the proposed approach and demonstrate the reliability-security tradeoff under various setups.  
\end{abstract}
\vspace*{-0.12cm}
\begin{IEEEkeywords}
finite blocklength regime, physical layer security, machine-type communications, URLLC
\end{IEEEkeywords}

\section{Introduction}
Future wireless networks are expected to support ultra-reliable and low-latency communications (URLLC), which is a key enabler for many newly emerged applications, e.g., vehicle-to-vehicle/infrastructure communications, augmented/virtual reality, and factory automation~\cite{Bennis_URLLC_intro_2018,She_URLLC_intro_2021}. One of the common features of those applications is latency-sensitive and mission-critical. In view of this, {the transmissions with URLLC are carried out via finite blocklength (FBL) codes, which are also often referred to as short-packet communications}~\cite{She_URLLC_resource_2017,Durisi_shortpacket_2016}. As a key departure from the assumption of the infinite blocklength, the transmission error can no longer be negligible even if the transmission rate is lower than the channel  capacity in FBL regime. 
Therefore, the impact of FBL codes on reliability should be carefully considered for the design framework in URLLC, which requires a fundamentally different approach from that in current wireless communication systems~\cite{Durisi_shortpacket_2016}. To tackle this issue, {the authors in~\cite{Polyanskiy_2010} derived bounds on the maximal achievable transmission rate in the FBL regime, based on which abounding works have been done to provide the FBL designs,} e.g., considering cooperative relay networks~\cite{Hu_2015}, non-orthogonal multiple access (NOMA) schemes~\cite{Ding_NOMA}, wireless power transfer~\cite{Alcaraz_WP_linear_approx_app2} and green communications~\cite{Singh_2021_EE_V_1}. 

On the other hand, the broadcast nature of wireless medium  makes information security another crucial issue in future wireless communication system design~\cite{Wang_security_6G_2020}. 
This concern becomes more critical when we shift the focus from human-centric communications to machine-type communications. For example, in the  Internet-of-Thing (IoT),  instead of being external listeners, the eavesdroppers may be other internal IoT devices that wiretap confidential information~\cite{Granjal_Iot_security_2015,Hassija_Iot_security_2019}. To address this problem, significant efforts have been devoted to enhance  security performance. In particular, compared with conventional upper-layer encryption, physical layer security (PLS) has become a promising solution for machine-type communications with massive devices due to its low complexity in the implementation and flexibility in the system design~\cite{Wu_PLS_2018}. In fact, PLS for non-URLLC networks has already been extensively investigated in various scenarios~\cite{Schaefer_PLS_2017}, e.g., in UAV-assisted networks~\cite{Sun_PLS_UAV_2019}, in cognitive radio networks~\cite{Xiang_PLS_congnitive_2019}, and in visible light communications~\cite{arfaoui_PLS_visible_2020}. 

Recently, PLS in the context of URLLC for machine-type communications has also received significant attention. 
In particular, as a more general expression of widely adopted Wyner's secrecy capacity, authors in~\cite{Yang_wiretap_2019} derived a tight bound for achievable secrecy transmission rate in FBL regime, which takes the influence of both finite blocklength codes and information leakage into account. Adopting this bound, various performance metrics are proposed. For example, the authors in~\cite{Wang_PLS_convert_2022} employ the artificial noise technique to achieve both covertness and security. The secrecy transmission rate is applied directly as the metric to be maximized and an analytical approach to obtain the global optimum and an efficient approach to obtain the sub-optimum are proposed. 
In~\cite{Wang_PLS_throughput_2019}, the average achievable secrecy throughput based on the error probability with a given information leakage is proposed as a secure metric, which is minimized in both single-antenna and multiple-antenna IoT scenarios. 
This metric is adopted and extended to UAV-assisted networks in~\cite{Wang_PLS_throughput_UAV_2020}, where the UAV's trajectory and transmit power are optimized. On the other hand, authors in~\cite{Wei_PLS_throughput_FD_2021} derive the expression of secrecy throughput in full-duplex multiuser system with the consideration of self-interference and co-channel interference. 
Moreover, the authors in~\cite{Feng_outage_2022} analyze the outage probability for secure transmission in the FBL regime, based on which the optimization framework is provided, aiming at maximizing the effective throughput. 

However, the interplay between reliability and security may be overlooked in the aforementioned works, especially for machine-type communications in the next generation of URLLC (xURLLC), since both reliability and security are the key performance indicator for xURLLC.  In particular, the transmission reliability from transmitter to both legitimate receiver and eavesdropper with FBL codes depends on the allocated resources of the transmission, e.g., blocklength or transmit power. If we reduce the allocated resources on purpose, it implies the drop of the reliability for legitimate receiver, {but the probability of information leakage, i.e., eavesdroppers correctly decoding the secrecy information}, 
also drops. In other words, as long as it hurts the eavesdropper (i.e., more secure) more than it hurts the legitimate receiver (i.e., less reliable), the overall reliable-secure performance can be improved. {Therefore, we may \emph{trade reliability for security} via counter-intuitively allocating fewer resources.} This security-reliability tradeoff in FBL regime is initially discussed in~\cite{Yang_wiretap_2019} for deriving the achievable secrecy transmission rate. However, to the best of our knowledge, it has not been addressed via a probabilistic metric to represent the reliable-secure performance with an emphasis on machine-type communications. In particular, unlike human-centric communications, the transmission with a fixed amount of information only occurs sporadically in machine-type communications. Therefore, the average secrecy rate~\cite{Wang_PLS_convert_2022} is not suitable to characterize the system performance in those scenarios. On the other hand, the outage probability with fixed transmission error probability~\cite{Feng_outage_2022} and the effective throughput with fixed information leakage probability~\cite{Wang_PLS_throughput_2019} may not fully reflect the dynamic between reliability and security in FBL regime. 

Motivated by above observations, in this work, we introduce a novel metric, leakage-failure probability (LFP), which indicates the possibility that the information fails to be decoded by the legitimate receiver or is successfully decoded by the eavesdropper. {Aiming at minimizing LFP, we formulate an optimization problem to address the fundamental tradeoff between reliability and security in the classic three-node scenario, where Eve is overhearing the communication between Alice and Bob. Due to its non-convexity, we provide an optimization framework inspired by majorization-minimization~\cite{Sun_2017_MM} and successive convex approximation~\cite{Scutari_2014_SCA} algorithms to solve the problem.} In particular, our contributions can be summarized as follows:
\begin{itemize}
    \item To reveal the overall system performance with the reliability-security tradeoff, we introduce the leakage-failure probability and characterize its expression for each transmission. {The relationship between our proposed metric and other well-known metrics is also analyzed.}  
    \item The tradeoff between reliability and security is analytically presented. To address this tradeoff, we formulate a general optimization problem to minimize the leakage-failure probability by jointly allocating the blocklength and transmit power. 
    \item {An optimization framework is proposed to solve the formulated problem via decoupling the multiplication of FBL error probability and finding its lower-bounded approximations. To tackle the non-convexity issue, we provide a novel approximation for the FBL error probability and its characterization of convexity, which is a more general result compared to our previous work in~\cite[Theorem. 1]{Zhu_convex_2022}.  
    Then, We propose an iterative search algorithm with successive approximations to solve the reformulated problem. Moreover, the convergence of the algorithm and the tightness of the approximation is ensured analytically.} 
    Then, We propose an iterative search algorithm with successive approximations to solve the reformulated problem. Moreover, the convergence of the algorithm is ensured analytically. 
    \item {To show the extensibility of our optimization framework, we further discuss the low-complexity solutions for blocklength allocation scheme with requirements of high reliable-secure thresholds,  as well as under the statistical channel state information (CSI).}           
    \item Via numerical simulations, we validate our analytical findings and evaluate the performance of the proposed approach, which is able to achieve nearly optimal solutions. To provide further insight for the resource allocation, we also demonstrate the influences of channel gain, number of Eve, as well as the packet size. Especially, we observe the reliability-security tradeoff and its potential applications under various setups that may provide guidance in the system design for PLS in machine-type communications.   
\end{itemize}

The remaining part of this paper is organized as follows. In Section II, we discuss the system model and characterize the metric. Then, in Section III, we provide the optimization framework, aiming at minimizing the leakage-failure probability. Next, we investigate the extendability in Section IV. The proposed designs are evaluated in Section V, while Section VI concludes the whole work.

\section{System Model and Leakage-Failure Probability}
\subsection{System Model}
Consider the typical transmission system with three nodes, where Alice transmits confidential information to a legitimate receiver (Bob) while the eavesdropper (Eve) attempting to overhear the legitimate user messages from Alice. However, unlike the human-centric scenarios, we consider scenarios with machine-type communications, in which the transmission only occurs sporadically with limited, but important information. 
In particular, in each communication round (upon the information arrival), {the transmission of Alice is carried out with a fixed packet size of $d$ [bits] via finite blocklength codes with the blocklength of $m$ [chn.use], where the transmission rate $r=\frac{d}{m}$ is upper-bounded by $d$ [bits/chn.use].}  
Without loss of generality, we assume the channels between Alice and Bob or Eve experience quasi-static Rayleigh fading, i.e., the channels are constant within the duration of each transmission  and vary in the next. Therefore, the channel coefficients from Alice to Bob and Eve, respectively, are denoted as $h_b=\sqrt{\xi_b}\hat{h}_b$ and $h_e=\sqrt{\xi_e}\hat{h}_e$, where $\sqrt{\xi_b}$ and $\sqrt{\xi_e}$ indicate the large-scale path loss. Moreover, $\hat{h}_b\sim \mathcal{N}(0,1)$ and $\hat{h}_e\sim \mathcal{N}(0,1)$ represent the small-scale fading, which are assumed to be independent and identically distributed. 
{We assume that the devices are with a single antenna and the perfect CSI of both Bob and Eve is available at Alice}\footnote{{Our system model can also be extended to the cases that Alice equips multiple antennas, where the proposed framework is still available. Later on, we will also discuss the cases that Eve equips multiple antennas and the perfect CSI of Eve is not available.}}. 
Then, the received signals at Bob and Eve with a given transmit power of $p$ are given by:
\begin{equation}
    \label{eq:signal_bob}
    y_b=\sqrt{p}h_bs+a_b,
\end{equation}
\begin{equation}
    \label{eq:signal_eve}
    y_e=\sqrt{p}h_es+a_e,
\end{equation}
where $s$ is the transmit signal, $a_b\sim \mathcal{CN}(0,\sigma_b^2)$ and $a_e\sim \mathcal{CN}(0,\sigma_e^2)$ are the additive white Gaussian noises at Bob and Eve, respectively. Accordingly, the SNRs can be expressed as:
\begin{equation}
    \label{eq:snr_bob}
    \gamma_b=\frac{p|h_b|^2}{\sigma_b^2},
\end{equation}
 \begin{equation}
    \label{eq:snr_eve}
    \gamma_e=\frac{p|h_e|^2}{\sigma_e^2}.
\end{equation}
\subsection{Characterization of Leakage-Failure Probability (LFP)} 

Since only a limited amount of data is transmitted in each communication round, the conventional metrics, such as secrecy transmission rate or secrecy outage probability, may not fully reveal the performance of PLS in the considered scenario. The main reason is that those metrics focus on the performance characterization for the transmissions, which are carried out continuously with multiple packets. 
In view of this, we introduce a novel metric, leakage-failure probability (LFP), $\varepsilon_{LF}$, to represent the reliable-secure performance for a single transmission. It indicates the probability of the event $Y$ that the transmitted packet is either successfully decoded by Eve (i.e., leakage occurs) or incorrectly decoded by Bob (i.e., the legitimate  transmission fails). 
Let $X_b$ and $X_e$ denote the event of correct decoding by Bob and Eve, respectively. We have $\varepsilon_{LF}=\mathcal{P}(Y=1)=\mathcal{P}(X_b=0 \cup X_e=1)$. {The details for all possible combinations of the events are listed in Table.~\ref{tab:err_LF}.} 
Recall that the transmission is carried out with blocklength $m$. Due to the FBL impact, the transmission can be erroneous even if the coding rate is lower than Shannon's capacity for both Bob and Eve. In~\cite{Polyanskiy_2010}, a tight bound of the maximal achievable transmission rate is derived with a given target decoding error probability $\bar{\varepsilon}$ in AWGN channels:
\begin{equation}
\label{eq:maximal_rate}
    r^*\approx \mathcal{C(\gamma)}-\sqrt{\frac{V(\gamma)}{m}}Q^{-1}(\bar\varepsilon),\vspace{-5pt}
\end{equation}
 where  ${\mathcal{C}}(\gamma) = {\log _2}( {1 + \gamma } )$ is the Shannon's capacity and ~$V(\gamma)$ is the channel dispersion~\cite{Chem_2015}. In the complex AWGN channel, $V(\gamma)=1- {(1+\gamma)^{-2}}$.  Moreover, $Q^{-1}(x)$ is the inverse Q-function with Q-function defined as $Q(x)=\int^\infty_x \frac{1}{\sqrt{2\pi}}e^{-\frac{t^2}{2}}dt$. 
{Additionally, for any given data size $d$, according to~\eqref{eq:maximal_rate}, the (block) error probability for a single transmission is given by:
\begin{equation}
\label{eq:error_tau}
{\textstyle
\varepsilon \!=\! {\mathcal{P}}(\gamma,\frac{d}{m},m)\! \approx\! Q\Big( {\sqrt {\frac{m}{V(\gamma)}} ( {{\mathcal{C}}(\gamma ) \!-\! r)} \ln2} \Big)\mathrm{,}}\vspace{-5pt}
\end{equation} 
where $r=\frac{d}{m}$ is the transmission rate.} 
Accordingly, the error probability of decoding the packet at Bob and Eve is expressed as $\varepsilon_b=\mathcal{P}(X_b=0)=Q\Big( {\sqrt {\frac{m}{V(\gamma_b)}} ( {{\mathcal{C}}(\gamma_b) \!-\! \frac{d}{m})} \ln2} \Big)$ and
$\varepsilon_e=\mathcal{P}(X_e=0)=Q\Big( {\sqrt {\frac{m}{V(\gamma_e)}} ( {{\mathcal{C}}(\gamma_e) \!-\! \frac{d}{m})} \ln2} \Big)$, 
respectively. Therefore, the LFP is given by:
\begin{equation}
    \begin{split}
        \label{eq:err_LF}
        \varepsilon_{LF}(\varepsilon_e,\varepsilon_b)&=1-\mathcal{P}(Y=1)=\mathcal{P}(X_b=0\cup X_e=1)\\
        &=1-(1-\varepsilon_b)\varepsilon_e=\varepsilon_b\varepsilon_e+(1-\varepsilon_e).
    \end{split}
\end{equation}
The above equality chain holds since $X_b$ and $X_e$ are two independent events. 



\begin{table*}[!t]
\centering
\vspace{-15pt}
 \begin{tabular}{||c| c| c||} 
 \hline
    & $X_e=0$ & $X_e=1$ \\ 
 \hline
 $X_b=1$ &\cellcolor{green!25} secure and reliable &\cellcolor{red!25} insecure but reliable \\ 
 \hline
 $X_b=0$ &\cellcolor{red!25} secure but unreliable &\cellcolor{red!25} insecure and unreliable \\ 
 \hline
\end{tabular}
\caption{The definition for all possible outcomes of event combinations with $X_b$ and $X_e$. Red indicates  non-preferred outcomes, the possibility of which is considered as the proposed metric, leakage-failure probability $\varepsilon_{LF}$. } 
\label{tab:err_LF}
\end{table*}

\subsection{Relationship between Leakage-failure Probability and Other Metrics} 
{In fact, there are various metrics have been proposed to characterize the reliable-secure performance for PLS. In particular, some of the most common metrics are as follows: 
\begin{itemize}
    \item \textbf{Secrecy capacity}: $\mathcal{C}_s=\mathcal{C}_b-\mathcal{C}_e$~\cite{Wyner_1975_wire}, the conventional metric based on the assumption that blocklength is infinite. It is inaccurate in the FBL regime. 
    \item \textbf{Maximal achievable secrecy rate}: $r^*_s=\mathcal{C}_s-\sqrt{\frac{V(\gamma_b)}{m}}Q^{-1}(\bar{\varepsilon}_b)-\sqrt{\frac{V(\gamma_e)}{m}}Q^{-1}(\bar{\delta})$~\cite{Yang_wiretap_2019}, the upper-bound of the achievable secrecy rate with a given error probability of Bob $\bar{\varepsilon}_b$ and a given information leakage probability $\bar{\delta}=1- \bar{\varepsilon}_e$, where $\bar{\varepsilon}_e$ is the given error probability of Eve in the classic three-node scenario). However, it may not fully represent the security of information in machine-type communications, in which there is only one transmission and the packet size is fixed. 
    \item \textbf{Outage Probability}:  $\mathcal{P}_{\text{out}}=\mathcal{P}(r^*_s \leq \frac{r}{m} | \bar{\varepsilon}_b,\bar{\varepsilon}_e)$~\cite{Feng_outage_2022}, the probability that  transmissions with the rate of $r^*_s$ violates either a given reliable constraint $\varepsilon_b\geq \bar{\varepsilon}_b$ or a given secure constraint $\varepsilon_e\leq \bar{\varepsilon}_e$. It characterizes the performance with fixed constraints, and therefore can not reveal the tradeoff between reliability and security. 
    \item \textbf{error probability with a given leakage constraint}: $\varepsilon_s=Q\big(\sqrt{\frac{m}{V(\gamma_b)}}(\mathcal{C}_s-\sqrt{\frac{V(\gamma_e)}{m}}
    Q^{-1}(1-\bar{\varepsilon}_e)-\frac{d}{m}\ln 2)\big)$~\cite{Wang_PLS_throughput_2019}, the decoding error probability of Bob with a given threshold $\bar{\varepsilon_e}$ for the error probability of Eve. This metric may be inaccurate to characterize the secure-reliable performance jointly, since the threshold is given and fixed so that the leakage is not influenced by the blocklength. 
\end{itemize}
Compared with those existing metrics, our proposed LFP takes the FBL impact into account while jointly characterizes the system performance for both reliability and security. Moreover, it focuses on the single transmission. Therefore, LFP is suitable to represent the feature of machine-type communications, in which the transmission occurs sporadically.} 

Note that the transmitted packet size $d$ is fixed. The LFP can also be interpreted as the possibility that the transmission just achieves a secrecy transmission rate of $r_s^*=\frac{d}{m}$. In particular, for any combination of $\varepsilon_e$ and $\varepsilon_b$ that achieves $\varepsilon_{LF}$, it holds that
\begin{equation}
\begin{split}
\label{eq:secrecy_rate}
    &r_s^*(\varepsilon_b,\varepsilon_e|\varepsilon_{LF})\\
    =&r^*_b(\varepsilon_b)-r^*_e(\varepsilon_e)\\
    =&\mathcal{C}(\gamma_b)-\sqrt{\frac{V(\gamma_b)}{m}}Q^{-1}(\varepsilon_b)-\mathcal{C}(\gamma_e)+\sqrt{\frac{V(\gamma_e)}{m}}Q^{-1}(\varepsilon_e)\\
    =&\mathcal{C}(\gamma_b)-\mathcal{C}(\gamma_e)-\sqrt{\frac{V(\gamma_b)}{m}}Q^{-1}(\varepsilon_b)-\sqrt{\frac{V(\gamma_e)}{m}}Q^{-1}(1-\varepsilon_e)\\
    =&\mathcal{C}_s-\sqrt{\frac{V(\gamma_b)}{m}}Q^{-1}(\varepsilon_b)-\sqrt{\frac{V(\gamma_e)}{m}}Q^{-1}(\delta).
\end{split}
\end{equation}
{Interestingly, ~\eqref{eq:secrecy_rate} coincides with the metric, secrecy transmission rate, in  ~\cite[Eq. (106)]{Yang_wiretap_2019}. In other words, although the direct transformation between them seems intractable, 
LFP $\varepsilon_{LF}$ and the secrecy transmission rate $r_s^*$ is implicitly correlated.  
To more intuitively distinguish those metrics from each other, we discuss the following example:}

\emph{\textbf{Example}: {Suppose that the channel between Alice and Bob is extremely good with $\mathcal{C}(\gamma_b)\to\infty$ while the channel between Alice and Eve is limited {so that $\mathcal{C}(\gamma_e)=(\mu+1)$, where $\frac{1}{m}\leq \mu \leq 1$. In each communication round, there is only a packet with data size of $d=1$ bit needed to be transmitted and there are sufficiently long blocklength available, i.e., $m\to\infty$.}  {Therefore, we can deduce the error probability at Bob and Eve based on~\eqref{eq:error_tau}, i.e., $\varepsilon_b=\varepsilon_e=0$. Clearly, the security capacity is infinity with $\mathcal{C}_s=\mathcal{C}(\gamma_b)-\mathcal{C}(\gamma_e)=\infty$. Also, we have an infinity secrecy transmission rate, i.e., $r^*_s=\mathcal{C}(\gamma_b)-\mathcal{C}(\gamma_e)-\sqrt{\frac{V(\gamma_b)}{m}}Q^{-1}(\varepsilon_b)-\sqrt{\frac{V(\gamma_e)}{m}}Q^{-1}(\delta)=\infty$. Then, it results in an outage probability of zero, i.e., $\mathcal{P}_{\text{out}}=\mathcal{P}(r^*_s\leq \frac{d}{m}|\varepsilon_b,\varepsilon_e)=0$. According to the outage probability, the transmission is secure. 
The similar conclusion can be drawn for the error probability with a given leakage constraint $\delta>0$ , i.e.,  $\varepsilon_s=Q(\infty)=0$.
On the contrary, LFP can be written as $\varepsilon_{LF}=1-(1-0)\cdot 0=1$, which indicates an extremely insecure transmission.} In fact, in such cases, despite of Bob successfully decoding the information, it is also obtained by Eve. In other words, the transmission is not secure at all.}} 

On one hand, this example points out the key difference in physical layer security between human-centric and machine-type communications: {In human-centric communications, the performance of 
outage probability indicates the performance of security; however, in machine-type communications, the security of information can no longer be fully presented by the quantity of maximal achievable transmission rate,}  since the amount of transmitted packet size is fixed. 

On the other hand, 
we can \emph{trade reliability for security}, since  reliability and security have opposite correlations with respect to the resources, e.g., transmit power $p$ or blocklength $m$. In particular, this can be analytically observed with the following lemma.
\begin{theorem}
\label{lemma:mono}
$\varepsilon_b$ is monotonically decreasing and $\delta$ is monotonically increasing in blocklength $m$ and transmit power $p$. 
\end{theorem}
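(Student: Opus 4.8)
The plan is to reduce the whole statement to the monotonicity of a single scalar function, exploit the fact that $Q(\cdot)$ is strictly decreasing, and then treat the two variables $m$ and $p$ separately. Observe first that $\varepsilon_b$ and $\varepsilon_e$ share the identical functional form $\varepsilon = Q\big(g(\gamma,m)\big)$ with
\begin{equation*}
g(\gamma,m) = \sqrt{\tfrac{m}{V(\gamma)}}\,\big(\mathcal{C}(\gamma) - \tfrac{d}{m}\big)\ln 2,
\end{equation*}
the only difference being the SNR argument ($\gamma_b$ versus $\gamma_e$). Since $Q(x) = \int_x^\infty \tfrac{1}{\sqrt{2\pi}}e^{-t^2/2}\,dt$ has $Q'(x) = -\tfrac{1}{\sqrt{2\pi}}e^{-x^2/2} < 0$ and is therefore strictly decreasing, it suffices to prove that $g$ is increasing in both $m$ and $p$. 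Establishing this for the common form immediately yields that $\varepsilon_b$ is decreasing, that $\varepsilon_e$ is decreasing, and hence that $\delta = 1 - \varepsilon_e$ is increasing, which is exactly the claim.

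For monotonicity in $m$, note that $\gamma = p|h|^2/\sigma^2$ does not depend on $m$, so $\mathcal{C}(\gamma)$ and $V(\gamma)$ act as constants in this step. Writing $g = \tfrac{\ln 2}{\sqrt{V(\gamma)}}\big(\mathcal{C}(\gamma)\sqrt{m} - d\,m^{-1/2}\big)$ and differentiating in $m$ produces a sum of two strictly positive contributions (one from $\mathcal{C}(\gamma)\sqrt{m}$, one from $-d\,m^{-1/2}$), so $\partial g/\partial m > 0$. This part is routine.

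The main obstacle is monotonicity in $p$, where the two effects of raising the SNR compete: a larger $\gamma$ increases the capacity $\mathcal{C}(\gamma)$ (pushing $g$ up) but also increases the channel dispersion $V(\gamma) = 1 - (1+\gamma)^{-2}$ (whose growth pushes $g$ down). Since $\gamma$ is strictly increasing in $p$, I would show instead that $g$ is increasing in $\gamma$. Differentiating $g = \ln 2\,\sqrt{m}\,N(\gamma)/\sqrt{V(\gamma)}$ with $N(\gamma) = \mathcal{C}(\gamma) - d/m$, and using $\mathcal{C}'(\gamma) = [(1+\gamma)\ln 2]^{-1}$ together with $V'(\gamma) = 2(1+\gamma)^{-3}$, the sign of $\partial g/\partial\gamma$ matches the sign of
\begin{equation*}
\frac{V(\gamma)}{(1+\gamma)\ln 2} - N(\gamma)\,(1+\gamma)^{-3}.
\end{equation*}
When $N(\gamma) \le 0$ (i.e.\ $\mathcal{C}(\gamma) \le d/m$) this quantity is manifestly positive. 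Otherwise I would clear $(1+\gamma)^{-3}$ and apply the identity $V(\gamma)(1+\gamma)^2 = (1+\gamma)^2 - 1$ to reduce the required inequality to $(1+\gamma)^2 - 1 \ge \ln(1+\gamma) - \tfrac{d}{m}\ln 2$; dropping the negative term, it suffices to prove $u^2 - 1 \ge \ln u$ for all $u = 1+\gamma \ge 1$. The latter holds with equality at $u = 1$ and for $u > 1$ follows by comparing derivatives, since $2u > 1/u$. Composing with $d\gamma/dp > 0$ then gives $\partial g/\partial p > 0$, which completes the argument.
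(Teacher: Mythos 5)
Your proof is correct and takes essentially the same route as the paper's: both reduce the claim, via the strict monotonicity of $Q$, to showing the argument $\omega(m,p)=\sqrt{m/V(\gamma)}\,(\mathcal{C}(\gamma)-d/m)\ln 2$ is increasing in $m$ and in $\gamma$, and your key inequality $u^2-1\ge \ln u$ for $u=1+\gamma$ is exactly the paper's auxiliary bound $\Delta_1(\gamma)=\gamma^2+2\gamma-\ln(\gamma+1)\ge 0$, established by the same derivative comparison. The only cosmetic difference is your case split on the sign of $\mathcal{C}(\gamma)-d/m$, which the paper avoids by keeping the non-negative term $\tfrac{d}{m}\ln 2$ as a separate summand in $\partial\omega/\partial\gamma$.
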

\begin{proof}
See Appendix~\ref{app:lemma_mono}.
\end{proof}
It implies that security can be improved via counter-intuitively allocating fewer resources, even if there are sufficiently more resources available. This is due to the fact that fewer resources do decrease the reliability at Bob, but also decrease the reliability at Eve, i.e., enhancing the security. As long as the decrement for Bob is less than the decrement for Eve, the reliable-secure performance is improved. Therefore, the resource allocation needs to be carefully studied to strike a balance between reliability and security that improves the overall  secure-reliable performance. 

\subsection{Problem Formulation}
In view of this, it motivates us to propose the new metric, leakage-failure probability $\varepsilon_{LF}$, and investigate the optimization framework to characterize the fundamental tradeoff between reliability and security accordingly. In particular, we aim at maximizing the secure reliability by optimally allocating the blocklength $m$ and transmit power $p$. {  With an emphasis on addressing the reliability-security tradeoff, we consider a minimal setup with unbounded $m$ and $p$, as well as unconstrained requirements for reliability and security}\footnote{{In a later section, we also investigate the optimization problem in a more practical scenario, where the resource is upper-bounded and the individual requirements for reliability and security are constrained.}}.  Therefore, the optimization problem can be written as:
\begin{mini!}
{m,p}{\varepsilon_{LF}}
{\label{prob:original}}{}
\addConstraint{p\geq 0\label{con:P}}
\addConstraint{m\in\mathbb{N}_+.\label{con:m}}
\end{mini!}
{Clearly, Problem~\eqref{prob:original} is an integer non-convex optimization problem, which can be solved with an exhaustive search. However, this may be impractical for real-world applications due to the high computational complexity. Therefore,   
in the subsequent section, we will propose an efficient optimization framework to solve the optimization problem.} 
\begin{figure}[!t]
    \centering
\includegraphics[width=0.85\textwidth,trim=0 10 0 0]{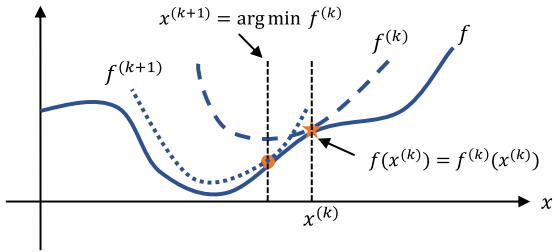}
\caption{{An example of the optimization framework for a general function $f(x)$.}
}
\label{fig:framework}
\vspace{-5pt}
\end{figure}
\section{Optimization Framework}
\label{sec:joint}
{In order to solve Problem~\eqref{prob:original}, we propose an optimization framework in this section. The general sketch of such a framework is as follows: Consider an optimization problem, in  which we aim at minimizing a general function $f(x)$ in its feasible set $x\in\mathcal{X}$ (but not necessarily convex). In the $k$-th round of iteration, we construct a lower-bounded approximation $f^{(k)}(x)\geq f(x)$, which is convex in $x$. The equality holds at the local point $x^{(k)}$, i.e., $f^{(k)}(x^{(k)})= f(x^{(k)})$. We find the minimum of $f^{(k)}$ with $x^{(k+1)}=\text{arg min} f^{(k)}(k)$, which is the local point for the next iteration. Then, we construct another lower-bounded and convex approximation $f^{(k+1)}(x)\geq f(x)$ at the $(k+1)$-th round of iterations, where it holds that $f^{(k+1)}(x^{(k+1)})= f(x^{(k+1)})$. This process repeats until the result converges. 
The convergence is ensured by the inequality chains:
\begin{equation}
\begin{split}
    f(x^{(k)})&=f^{(k)}(x^{(k)})\geq f^{(k)}(x^{(k+1)})\geq f(x^{(k+1)})\\
    &=f^{(k+1)}(x^{(k+1)})\geq f^{(k+1)}(x^{(k+2)})\geq \dots 
\end{split}
\end{equation}
An illustration of the optimization framework as shown in Fig.~\ref{fig:framework}.}  

\subsection{Bounded FBL Error Probability}
Since $\varepsilon_{LF}$ involves a multiplication of $\varepsilon_b\varepsilon_e$, which prevents further convexity analysis, we provide the following lemma to decouple it:
\begin{theorem}
\label{lemma:app}
For any functions $f_i(\boldsymbol{x})>0$, $\forall i\in\{1,\dots,N\}$, their product $\prod^N_{i=1}f_i(\boldsymbol{x})$ can be upper-bounded with respect to constants $\hat{f}_i$, $\forall i\in\{1,\dots,N\}$, i.e., {
\begin{equation}
    \label{eq:convex_app}
    \prod^N_{i=1}f_i(\boldsymbol{x})\leq \frac{1}{\prod^N_{i=1}\hat{F}_i}
        \left(
            \frac{
                \sum^N_{i=1}\hat{F}_if_i(\boldsymbol{x})
                }
                {N}
        \right)^N,
\end{equation}}
where $\hat{F}_i=\frac{\hat{f}_1}{\hat{f}_i}$, $\forall i\in\{1,\dots,N\}$.  
\end{theorem}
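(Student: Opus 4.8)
The plan is to recognize the claimed bound as nothing more than the classical arithmetic--geometric mean (AM--GM) inequality applied to a suitably rescaled collection of terms, together with a verification that the inequality is tight at the linearization point so that it fits the majorization--minimization template described above. Throughout I would assume, as is implicit in the statement, that the constants satisfy $\hat f_i>0$, so that each weight $\hat F_i=\hat f_1/\hat f_i$ is strictly positive; this is the only hypothesis that needs to be made explicit.

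First I would introduce the auxiliary positive quantities $g_i(\boldsymbol{x}):=\hat F_i f_i(\boldsymbol{x})$, which are well defined and strictly positive because both $\hat F_i>0$ and $f_i(\boldsymbol{x})>0$. The standard AM--GM inequality for $N$ positive reals then yields $\prod_{i=1}^N g_i(\boldsymbol{x})\le\bigl(\tfrac1N\sum_{i=1}^N g_i(\boldsymbol{x})\bigr)^N$. Next I would unwind the definition of $g_i$: since $\prod_{i=1}^N g_i(\boldsymbol{x})=\bigl(\prod_{i=1}^N\hat F_i\bigr)\prod_{i=1}^N f_i(\boldsymbol{x})$ and the weights $\hat F_i$ are constants independent of $\boldsymbol{x}$, dividing both sides of the AM--GM inequality by $\prod_{i=1}^N\hat F_i>0$ immediately reproduces \eqref{eq:convex_app}, which completes the bound.

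Because the whole purpose of the lemma is to supply a surrogate that is exact at the current iterate, I would then verify the tightness condition explicitly. Equality in AM--GM holds if and only if all the $g_i$ coincide; writing $g_i(\boldsymbol{x})=\hat f_1 f_i(\boldsymbol{x})/\hat f_i$ shows that this occurs precisely when $f_i(\boldsymbol{x})=\hat f_i$ for every $i$. Hence, if the constants $\hat f_i$ are chosen as the values $f_i(\boldsymbol{x}^{(k)})$ at the $k$-th iterate, the right-hand side of \eqref{eq:convex_app} touches $\prod_{i=1}^N f_i$ at $\boldsymbol{x}^{(k)}$, exactly as the framework requires for the monotone-descent chain of the preceding subsection.

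As for difficulty, the estimate itself is immediate, so the only real substance is aligning the equality case of AM--GM with the majorization--minimization requirement. I do not expect a genuine obstacle in deriving the inequality; the care lies entirely in (i) making the positivity hypotheses on $\hat f_i$ and $f_i$ explicit so that AM--GM is applicable, and (ii) recording the equality condition, since it is this condition that certifies tightness at the linearization point and thereby underpins the convergence argument rather than the inequality per se.
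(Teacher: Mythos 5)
Your proposal is correct and follows essentially the same route as the paper: regroup the product via the weights $\hat F_i=\hat f_1/\hat f_i$, apply AM--GM to the rescaled terms $\hat F_i f_i(\boldsymbol{x})$, and divide by $\prod_{i=1}^N\hat F_i>0$, with tightness at $f_i(\boldsymbol{x})=\hat f_i$ certifying the surrogate property needed by the iterative framework. One small caveat: equality in AM--GM holds precisely when all $\hat F_i f_i(\boldsymbol{x})$ coincide, i.e., $f_i(\boldsymbol{x})=c\,\hat f_i$ for a common constant $c>0$, so your phrase ``precisely when $f_i(\boldsymbol{x})=\hat f_i$'' overstates necessity, though the sufficiency you actually use at the iterate $\boldsymbol{x}^{(k)}$ is exactly what the paper's remark asserts.
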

\begin{proof}
{First, we introduce $\hat{F}_i=\frac{\hat{f}_1}{\hat{f}_i}$ to regroup the product, i.e.,}
\begin{equation}
    \prod^N_{i=1}f_i(\boldsymbol{x})=\frac{\prod^N_{i=1}\hat{F}_i}{\prod^N_{i=1}\hat{F}_i}\prod^N_{i=1}f_i(\boldsymbol{x})=\frac{1}{\prod^N_{i=1}\hat{F}_i}{\prod^N_{i=1}\hat{F}_if_i(\boldsymbol{x})},
\end{equation}
where it holds $\hat{F}_if_i(\boldsymbol{x})\geq0$. Then, we can construct the inequality of arithmetic and geometric means for $\hat{F}_if_i(\boldsymbol{x})$:
\begin{equation}
\begin{split}
    \sqrt[N]{\prod^N_{i=1}\hat{F}_if_i(\boldsymbol{x})}&\leq\frac{\sum^N_{i=1}\hat{F}_if_i(\boldsymbol{x})}{N}\\
    \Longleftrightarrow\prod^N_{i=1}\hat{F}_if_i(\boldsymbol{x})&\leq \left( \frac{\sum^N_{i=1}\hat{F}_if_i(\boldsymbol{x})}{N} \right)^N\\
    \Longleftrightarrow\frac{1}{\prod^N_{i=1}\hat{F}_i}\prod^N_{i=1}\hat{F}_if_i(\boldsymbol{x})&\leq \frac{1}{\prod^N_{i=1}\hat{F}_i}\left( \frac{\sum^N_{i=1}\hat{F}_if_i(\boldsymbol{x})}{N} \right)^N\\
    \Longleftrightarrow\prod^N_{i=1}f_i(\boldsymbol{x})&\leq \frac{1}{\prod^N_{i=1}\hat{F}_i}\left( \frac{\sum^N_{i=1}\hat{F}_if_i(\boldsymbol{x})}{N} \right)^N,
\end{split}
\end{equation}
which completes the proof.
\end{proof}
\emph{\textbf{Remark:} Note that equality holds if $f_i(\boldsymbol{x})=\hat{f}_i$, i.e., $\hat{F}_if_i(\boldsymbol{x})=\hat{F}_jf_j(\boldsymbol{x})$, $\forall i\neq j$. This lemma can be intuitively interpreted as follows: For any product of functions with non-negative terms, we can approximate it as the summation of the same terms with a given local point, where the accuracy is guaranteed at this point.}        

According to Lemma~\ref{lemma:app}, with a given resource allocation $(\hat{m},\hat{p})$, we have{
\begin{multline}
      \label{eq:err_FL_app}
    \varepsilon_{LF}(m,p)
    \leq \frac{\varepsilon_e(\hat{m},\hat{p})}{4\varepsilon_b(\hat{m},\hat{p})}
        \left(
            \varepsilon_b(m,p)+\frac{\varepsilon_b(\hat{m},\hat{p})}{\varepsilon_e(\hat{m},\hat{p})}\varepsilon_e(m,p)
        \right)^2\\ +(1-\varepsilon_e(m,p)).  
\end{multline}
}However, such an approximation is still not sufficient to reformulate Problem~\eqref{prob:original} into a convex one, since both  $\varepsilon_e$ and $\varepsilon_b$ are  non-convex and non-concave. To tackle this, we further aim at reconstructing the Q-function $ Q( \omega)$, where $\omega(m,p)={\sqrt {\frac{m}{V(\gamma)}} ( {{\mathcal{C}}(\gamma ) \!-\! \frac{d}{m})} \ln2} $ is the auxiliary function. In particular, we define $\omega_e={\sqrt {\frac{m}{V(\gamma_e(p))}} ( {{\mathcal{C}}(\gamma_e(p) ) \!-\! \frac{d}{m})} \ln2}$ and $\omega_b={\sqrt {\frac{m}{V(\gamma_b(p))}} ( {{\mathcal{C}}(\gamma_b(p) ) \!-\! \frac{d}{m})} \ln2}$ for Bob and Eve, respectively. Then, we can establish the following lemma:
\begin{theorem}
\label{lemma:Q_app}
For a given $\hat{\omega}\in\mathcal{R}$, the Q-function $Q(\omega)=\int^\infty_\omega \frac{1}{\sqrt{2\pi}}e^{-\frac{\omega^2}{2}}d\omega$ is bounded by
\begin{equation}
    \label{eq:Q_app}
    1-b(-\hat{\omega})e^{-a(-\hat{\omega})\omega}-c(-\hat{\omega})\leq Q(\omega)\leq b(\hat{\omega})e^{-a(\hat{\omega})\omega}+c(\hat{\omega}),
\end{equation}
where
\begin{equation}
    a(\hat{\omega})=\max\{\frac{e^{-\frac{(\hat{\omega})^2}{2}}}{\sqrt{2\pi}Q(\hat{\omega})},\hat{\omega}\}>0, \label{eq:le2_a}
\end{equation}
\begin{equation}
    b(\hat{\omega})=\frac{1}{\sqrt{2\pi}\hat a}e^{\hat a\hat{\omega}-\frac{(\hat{\omega})^2}{2}}>0, \label{eq:le2_b}
\end{equation}
\begin{equation}
    c(\hat{\omega})=Q(\hat{\omega})-\hat be^{-\hat a\hat{\omega}}\label{eq:le2_c}.
\end{equation}
Specially, the equality holds for $\omega=\hat{\omega}$.
\end{theorem}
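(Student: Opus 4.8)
The plan is to recognize the proposed bound $g(\omega):=b(\hat\omega)e^{-a(\hat\omega)\omega}+c(\hat\omega)$ as the exponential-plus-constant curve that is tangent to $Q$ at the anchor point $\omega=\hat\omega$, and then to show that the prescribed decay rate $a(\hat\omega)$ forces this tangent curve to lie above $Q$ everywhere. Writing $a=a(\hat\omega)$, $b=b(\hat\omega)$, $c=c(\hat\omega)$ for brevity, I first observe that the definitions of $b$ and $c$ are nothing but the value- and slope-matching conditions at $\hat\omega$: substituting the formula for $c$ gives $g(\hat\omega)=Q(\hat\omega)$, which is exactly the claimed equality, and a direct differentiation together with $Q'(\omega)=-\tfrac{1}{\sqrt{2\pi}}e^{-\omega^2/2}$ shows that the definition of $b$ is equivalent to $g'(\hat\omega)=Q'(\hat\omega)$. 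Thus the whole problem reduces to controlling the sign of the gap $h(\omega):=g(\omega)-Q(\omega)$, which satisfies $h(\hat\omega)=h'(\hat\omega)=0$ by construction.

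For the upper bound I would study $h'$ directly. Using $ab\,e^{-a\omega}=\tfrac{1}{\sqrt{2\pi}}e^{a\hat\omega-\hat\omega^2/2-a\omega}$, the inequality $h'(\omega)>0$ becomes, after comparing the exponents of the two exponential terms, the quadratic condition $(\omega-\hat\omega)\bigl(2a-\hat\omega-\omega\bigr)>0$. Hence $h$ has exactly the two stationary points $\omega=\hat\omega$ and $\omega=2a-\hat\omega$, and since $a\ge\hat\omega$ (guaranteed by the $\max$ defining $a$) these satisfy $2a-\hat\omega\ge\hat\omega$, so $h$ decreases on $(-\infty,\hat\omega)$, increases on $(\hat\omega,2a-\hat\omega)$, and decreases again on $(2a-\hat\omega,\infty)$. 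It then remains to pin down the two tails: as $\omega\to-\infty$ the term $b\,e^{-a\omega}\to+\infty$ while $Q\to1$, so $h\to+\infty$; as $\omega\to+\infty$ both $b\,e^{-a\omega}\to0$ and $Q\to0$, so $h\to c$. The second branch of the $\max$, namely $a\ge\tfrac{e^{-\hat\omega^2/2}}{\sqrt{2\pi}Q(\hat\omega)}$, is precisely the statement $c\ge0$, since it rearranges to $Q(\hat\omega)\ge\tfrac{e^{-\hat\omega^2/2}}{\sqrt{2\pi}a}=b\,e^{-a\hat\omega}$. Combining the monotonicity pattern with $h(\hat\omega)=0$, $h(-\infty)=+\infty$ and $h(+\infty)=c\ge0$ yields $h\ge0$ on each of the three intervals, which is the desired upper bound $Q(\omega)\le g(\omega)$.

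For the lower bound I would not repeat the analysis but instead invoke the reflection identity $Q(\omega)=1-Q(-\omega)$. Applying the already-established upper bound to the argument $-\omega$ with anchor $-\hat\omega$ yields an upper bound on $Q(-\omega)$ expressed through $a(-\hat\omega),b(-\hat\omega),c(-\hat\omega)$; subtracting it from $1$ converts it into the stated lower bound on $Q(\omega)$, and the same substitution carries the tangency equality at the anchor over to the claimed equality at $\omega=\hat\omega$. One must track the exponent sign with care here, since the reflection $\omega\mapsto-\omega$ reverses the sign appearing in the exponent.

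The main obstacle is that $Q$ is neither globally convex nor globally concave — it has an inflection at $0$, being concave on $(-\infty,0)$ and convex on $(0,\infty)$ — so one cannot simply compare curvatures to conclude $h\ge0$. The resolution is the twofold role played by the $\max$ defining $a$: the branch $a\ge\hat\omega$ guarantees $h''(\hat\omega)=\tfrac{a-\hat\omega}{\sqrt{2\pi}}e^{-\hat\omega^2/2}\ge0$, so that the anchor is a local minimum of $h$ rather than a saddle, while the branch $a\ge\tfrac{e^{-\hat\omega^2/2}}{\sqrt{2\pi}Q(\hat\omega)}$ forces the far-right limit $c$ to be non-negative, which is exactly what closes the argument on the unbounded interval $(2a-\hat\omega,\infty)$ where $h$ is decreasing. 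A minor technical point to dispatch is that the two stationary points are genuinely distinct, i.e. $a>\hat\omega$ strictly; this follows from the standard tail estimate $Q(x)<\tfrac{1}{\sqrt{2\pi}}e^{-x^2/2}/x$ for $x>0$ (and trivially for $\hat\omega\le0$, where $a\ge\tfrac{e^{-\hat\omega^2/2}}{\sqrt{2\pi}Q(\hat\omega)}>0\ge\hat\omega$), so that the degenerate monotone-decreasing case $a=\hat\omega$ — in which the upper bound would fail just to the right of $\hat\omega$ — never actually occurs.
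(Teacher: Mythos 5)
Your proof is correct and takes essentially the same route as the paper's Appendix~B: both reduce the sign of the gap's derivative to the quadratic condition with roots $\hat{\omega}$ and $2a-\hat{\omega}$, use the two branches of the $\max$ in \eqref{eq:le2_a} to order these stationary points and to make the limit of the gap at $+\infty$ (namely $\pm c$) have the right sign, and obtain the lower bound from the upper bound via the reflection $Q(\omega)=1-Q(-\omega)$ with anchor $-\hat{\omega}$. Your additions — the tangency reading of $b$ and $c$ (value and slope matching at $\hat{\omega}$), the strictness $a>\hat{\omega}$ via the Gaussian tail bound, and the explicit tracking of the exponent sign under reflection, where your derived $e^{+a(-\hat{\omega})\omega}$ matches what the paper actually uses in \eqref{eq:delta_app} and exposes a sign typo in \eqref{eq:Q_app} — refine the write-up but do not change the method.
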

\begin{proof}
See Appendix~\ref{app:lemma_app}.
\end{proof}

Note that {the decoding error probability of any single transmission in FBL $\varepsilon(m,p)$} is a composition of functions $Q(\omega)$ and $\omega(m,p)$, i.e., $\varepsilon(m,p)=Q(\omega(m,p))$. {For example, we have $\varepsilon_b=Q(\omega_b(m,p))$ and $\varepsilon_e=Q(\omega_e(m,p))$.} 
Therefore, suppose that there is a pre-defined resource allocation with local point $(\hat{m},\hat{p})$, we can derive the following approximation for $\varepsilon_{LF}$ according to Lemma~\ref{lemma:app} and Lemma~\ref{lemma:Q_app} {along with the approximations of $\varepsilon_b$ and $\varepsilon_e$}:
\begin{equation}
\begin{split}
\label{eq:err_LF_app}
    &\varepsilon_{LF}(m,p) \\
    \lesssim&  \frac{\varepsilon_e(\hat{m},\hat{p})}{4\varepsilon_b(\hat{m},\hat{p})}
        \left(
            \hat{\varepsilon}_b(m,p)+\frac{\varepsilon_b(\hat{m},\hat{p})}{\varepsilon_e(\hat{m},\hat{p})}\hat{\varepsilon}_e(m,p)
        \right)^2+\hat{\delta}(m,p) \\
        \triangleq&\hat{\varepsilon}_{LF}(m,p|\hat{m},\hat{p}),
\end{split}
\end{equation}
where 
\begin{equation}
    \label{eq:err_b_app}
     \hat{\varepsilon}_b=
       b(\hat{\omega})e^{-a(\hat{\omega})\omega_b}+c(\hat{\omega}), 
\end{equation}
\begin{equation}
    \label{eq:err_e_app}
    \hat{\varepsilon}_e=
        b(\hat{\omega})e^{-a(\hat{\omega})\omega_e}+c(\hat{\omega}), 
\end{equation}
\begin{equation}
    \label{eq:delta_app}
    \hat{\delta}=
    b(-\hat{\omega})e^{a(-\hat{\omega})\omega_e}+c(-\hat{\omega}). 
    \end{equation}
    {According to (19), we have $\varepsilon=\hat \varepsilon(m_0,p_0|\hat{m},\hat{p})$, if $(m_0,p_0)=(\hat{m},\hat{p})$ and $\varepsilon<\hat \varepsilon(m_0,p_0|\hat{m},\hat{p})$, if $(m_0,p_0)\neq(\hat{m},\hat{p})$. In other words, this approximation is lower-bounded and it is tight at the local point $(\hat{m},\hat{p})$.} 
\subsection{Problem Reformulation and Convexity Characterization}
    According to Lemma~\ref{lemma:app} and Lemma~\ref{lemma:Q_app}, Problem~\eqref{prob:original} with given local point $(\hat{m},\hat{p})$ can be reformulated as:  
    \begin{mini!}
        {m,p}{\hat{\varepsilon}_{LF}(m,p|\hat{m},\hat{p})}
        {\label{problem:reformulated}}{}
        \addConstraint{p\geq 0}
        \addConstraint{m\geq 0\label{con:relaxed_m}}
        \addConstraint{\hat{\varepsilon}_e\leq 1,\ \hat{\varepsilon}_b\leq 1,\label{con:hat_err_e_leq_1}}
    \end{mini!}
{where additional constraints~\eqref{con:hat_err_e_leq_1} 
ensure the approximation is feasible, since the error probability can not exceed $1$. Moreover, the blocklength $m$ is relaxed from $m\in\mathbb{N}+$ into $m\geq 0$.}  
Then, we provide the following lemma to characterize the convexity:
\begin{theorem}
\label{lemma:convex}
$\omega(m,p)={\sqrt {\frac{m}{V(\gamma(p))}} ( {{\mathcal{C}}(\gamma(p) ) \!-\! \frac{d}{m})} \ln2}$ is jointly concave in blocklength $m$ and transmit power $p$ if the following condition holds:
\begin{equation}
\label{eq:convex_condition}
\begin{split}
    r\geq \frac{-\Delta_b+\sqrt{\Delta^2_b-4\Delta_a\Delta_c}}{2\Delta_a},
\end{split}    
\end{equation}
where 
\begin{equation}
\begin{split}
\Delta_a=\frac{8+9t}{4t^2},
\end{split}
\end{equation}
\begin{equation}
\begin{split}
\Delta_b=\frac{t(6t+8)-(3t+8)\mathcal{C}(\gamma)\ln2}{4t^2\ln2},
\end{split}
\end{equation}
\begin{equation}
\begin{split}
\Delta_c=\frac{t\mathcal{C}\ln2(4-3\ln2)+t^2(\mathcal{C}\ln2-1)-4\mathcal{C}^2(\ln2)^2}{4t^2(\ln2)^2},
\end{split}
\end{equation}
with $t=\gamma^2+2\gamma$. 
\end{theorem}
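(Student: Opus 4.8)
The plan is to prove joint concavity by forming the $2\times 2$ Hessian of $\omega$ and showing it is negative semidefinite precisely when condition~\eqref{eq:convex_condition} holds. First I would remove $p$ from the analysis. Since $\gamma(p)=p|h|^2/\sigma^2$ is a positive linear function of $p$, the map $(m,p)\mapsto(m,\gamma)$ is affine, and concavity is invariant under affine reparametrization: the Hessian in $(m,p)$ equals $S^\top H S$ with $S=\mathrm{diag}(1,|h|^2/\sigma^2)$, a congruence that preserves definiteness. Hence it suffices to establish joint concavity of $\omega$ in $(m,\gamma)$. Using $\mathcal{C}(\gamma)\ln 2=\ln(1+\gamma)$ I would put $\omega$ into the $m$-separable form $\omega=A(\gamma)\sqrt m-B(\gamma)\,m^{-1/2}$ with $A=\ln(1+\gamma)/\sqrt{V(\gamma)}>0$ and $B=d\ln 2/\sqrt{V(\gamma)}>0$.

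In this form the $m$-derivatives are immediate: $\omega_{mm}=-\tfrac14 A\,m^{-3/2}-\tfrac34 B\,m^{-5/2}<0$, so $\omega$ is unconditionally concave in $m$ alone. This already resolves half of the semidefiniteness test, because a symmetric $2\times 2$ matrix $H$ is negative semidefinite iff its leading diagonal entry is nonpositive and $\det H\ge 0$: if $\omega_{mm}<0$ and $\det H=\omega_{mm}\omega_{\gamma\gamma}-\omega_{m\gamma}^2\ge 0$, then $\omega_{mm}\omega_{\gamma\gamma}\ge \omega_{m\gamma}^2\ge 0$ forces $\omega_{\gamma\gamma}\le 0$ automatically. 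Thus the whole statement reduces to the single inequality $\det H\ge 0$.

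Next I would compute $\det H$ by collecting powers of $m$, obtaining $\det H=\tfrac14\big[-m^{-1}(AA''+A'^2)+m^{-2}(AB''-3BA''-2A'B')+m^{-3}(3BB''-B'^2)\big]$. Multiplying by the positive quantity $4m$ and writing $B=d\tilde B$ with $\tilde B=\ln 2/\sqrt{V}$, every surviving factor of $d$ pairs with a factor $m^{-1}$, so with $r=d/m$ the sign of $\det H$ is governed by a quadratic $\eta r^2+\beta r+\alpha$, where $\eta=3\tilde B\tilde B''-\tilde B'^2$, $\beta=A\tilde B''-3\tilde B A''-2A'\tilde B'$ and $\alpha=-(AA''+A'^2)$. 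The substitution $t=\gamma^2+2\gamma$ is what makes the algebra collapse: since $V=t/(1+\gamma)^2$ we get $\tilde B=\ln 2\,(1+\gamma)\,t^{-1/2}$, and using $t'=2(1+\gamma)$ together with $(1+\gamma)^2=t+1$ the derivatives simplify dramatically, e.g.\ $\tilde B'=-\ln 2\,t^{-3/2}$ and $\tilde B''=3\ln 2\,(1+\gamma)\,t^{-5/2}$. A short computation then gives $\eta=(\ln 2)^2(9t+8)/t^3=\tfrac{4(\ln 2)^2}{t}\,\Delta_a$, and I expect the same positive common factor $4(\ln 2)^2/t$ to reduce $\beta$ and $\alpha$ to $\Delta_b$ and $\Delta_c$. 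Since $\Delta_a>0$ for $t>0$, the parabola in $r$ opens upward, so $\det H\ge 0$ holds exactly for $r$ at least the larger root $\big(-\Delta_b+\sqrt{\Delta_b^2-4\Delta_a\Delta_c}\big)/(2\Delta_a)$, which is~\eqref{eq:convex_condition}.

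The main obstacle I anticipate is the matching of the $A$-terms, since $A$ carries the transcendental factor $\ln(1+\gamma)$, so $A'$ and $A''$ mix $\ln(1+\gamma)$ with rational functions of $\gamma$; verifying that $\alpha=-(AA''+A'^2)=-\tfrac12(A^2)''$ and the cross term $\beta$ reduce exactly to $\tfrac{4(\ln 2)^2}{t}\Delta_c$ and $\tfrac{4(\ln 2)^2}{t}\Delta_b$ demands careful bookkeeping. Writing $A^2=L^2(1+t^{-1})$ with $L=\ln(1+\gamma)$ and reusing the same $t$-identities keeps this tractable and should reproduce the $\mathcal{C}\ln 2$ and $\mathcal{C}^2(\ln 2)^2$ terms appearing in $\Delta_c$. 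A secondary point worth checking is that the discriminant $\Delta_b^2-4\Delta_a\Delta_c$ is nonnegative and the larger root positive, so the stated threshold is both real and binding; where the discriminant is negative, $\det H>0$ for all $r>0$ and concavity holds unconditionally.
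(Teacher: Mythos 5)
Your proposal follows the same skeleton as the paper's Appendix~C: pass from $p$ to $\gamma$ by linearity of $\gamma(p)$, test the $2\times 2$ Hessian in $(m,\gamma)$, and reduce everything to the sign of $\det\mathbf{H}$, organized as a quadratic in $r=d/m$. You add two genuine refinements. First, the separable form $\omega=A(\gamma)\sqrt{m}-B(\gamma)m^{-1/2}$ turns the paper's three separate second-derivative computations and the subsequent regrouping into powers of $m$ (hence of $r$) into pure bookkeeping; your expansion of $\det\mathbf{H}$, the values $\tilde B'=-\ln 2\,t^{-3/2}$, $\tilde B''=3\ln 2\,(1+\gamma)t^{-5/2}$, and $\eta=(\ln 2)^2(9t+8)/t^3=\frac{4(\ln 2)^2}{t}\Delta_a$ are all correct and consistent with the paper's derivatives. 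Second, your observation that $\omega_{mm}<0$ together with $\det\mathbf{H}\ge 0$ automatically forces $\omega_{\gamma\gamma}\le 0$ lets you skip the paper's separate concavity-in-$\gamma$ argument (the monotonicity of its auxiliary function $\Delta_2$), which the paper carries out explicitly; that is a real economy. One caution: your blanket ``iff'' characterization of $2\times 2$ negative semidefiniteness fails in the degenerate case of a zero leading entry (e.g.\ $\mathrm{diag}(0,1)$ passes your test but is not NSD), though your actual use is sound because $A,B>0$ gives strictly $\omega_{mm}<0$.

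The step you deferred --- matching $\alpha$ and $\beta$ to $\Delta_c$ and $\Delta_b$ --- does not come out as you expect, and this is the substantive point. Completing your own calculation (I did, both from your separable form and independently from the paper's printed second derivatives) yields $\beta=\ln 2\,t^{-3}\bigl[t(3t+8)-(6t+8)\mathcal{C}\ln 2\bigr]$ and $\alpha=t^{-3}\bigl[t^2(\mathcal{C}\ln 2-1)+t\,\mathcal{C}\ln 2\,(4-3\,\mathcal{C}\ln 2)-4\mathcal{C}^2(\ln 2)^2\bigr]$, whereas the lemma as printed has $t(6t+8)$ and $(3t+8)\mathcal{C}\ln 2$ transposed in $\Delta_b$, and $(4-3\ln 2)$ where $(4-3\,\mathcal{C}\ln 2)$ should stand in $\Delta_c$. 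The two versions coincide exactly at $\gamma=1$ (where $\mathcal{C}=1$), which may be why the slip went unnoticed, but they differ elsewhere: at $\gamma=3$, $m=100$, $d=100$ the exact Hessian gives $\det\mathbf{H}\approx 4.46\times 10^{-4}$, your corrected quadratic reproduces this, while the printed $\Delta_b,\Delta_c$ predict $\approx 8.56\times 10^{-4}$. So you must carry the bookkeeping through rather than assert the match; done correctly, your route establishes the lemma with the corrected coefficients, and the printed ones appear to be typos. Nothing structural changes: $\Delta_a>0$ still makes the parabola open upward, the corrected $\Delta_b$ is still nonnegative (since $\mathcal{C}\ln 2=\ln(1+\gamma)\le\gamma\le t/2$ gives $(6t+8)\mathcal{C}\ln 2\le 3t^2+4t\le t(3t+8)$), so at most one positive root exists, the threshold form of the condition stands, and your closing remark --- negative discriminant implies unconditional concavity for $r>0$ --- remains valid.
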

\begin{proof}
See Appendix~\ref{app:lemma_convex}.
\end{proof}

\emph{\textbf{Remark 2:} Although similar analytical results are already shown in~\cite{Zhu_convex_2022}, Lemma~\ref{lemma:convex} provides a more general characterization of the joint convexity, where the condition of $\gamma\geq 1$ in~\cite{Zhu_convex_2022} is no longer necessary. This significantly improves the applicability in machine-type communications with URLLC, in which the SNR of the transmission may not always be good enough due to the dynamic nature of the wireless environments. Moreover, despite the intuitive expression, the condition~\eqref{eq:convex_condition} actually presents a relationship between transmission rate and SNR for the validity of concavity. In other words, for a given SNR, a convex feasible range of transmission rate can be found. Numerically, the condition is fulfilled if $r\geq 0.023$ [bit/chn.use]. In fact, the accuracy of~\eqref{eq:error_tau} requires the transmission rate to be sufficiently high~\cite{Lancho_why_normal_approx}. In the remainder of this paper, we assume that the condition in~\eqref{eq:convex_condition} is implicitly fulfilled for most practical FBL applications in the region of interest.  
}

According to Lemma~\ref{lemma:convex}, the convexity of Problem~\eqref{problem:reformulated} can be straightforwardly characterized via the composition rule~\cite{Boyd_2004_convex} as follow:
\begin{theorem1}
Problem~\eqref{problem:reformulated} is convex.
\end{theorem1}
\begin{proof}
The objective function $\hat{\varepsilon}_{LF}$ is a sum of exponential functions with respect to $\omega_b$ and $\omega_e$ with $a\geq0$. Therefore, according to~\eqref{eq:err_b_app},~\eqref{eq:err_e_app} and~\eqref{eq:delta_app}, $\hat{\varepsilon}_{LF}(\omega_b,\omega_e)$ is convex and decreasing in $\omega_b(m,p)$ and $\omega_e(m,p)$ while both of them are jointly concave in variables $m$ and $p$, i.e., $\hat{\varepsilon}_{LF}(\omega_b(m,p),\omega_e(m,p))$ is jointly convex in $m$ and $p$. 
Obviously, all constraints are either affine or convex. As a result, Problem~\eqref{problem:reformulated} is convex.
\end{proof}
Therefore, Problem~\eqref{problem:reformulated} can be efficiently solved with standard convex optimization tools. Recall that $\hat{\varepsilon}_{LF}$ is a lower-bounded approximation of the actual objective function $\varepsilon_{LF}$, with which the accuracy is only guaranteed at $\omega=\hat{\omega}$. Solving Problem~\eqref{problem:reformulated} once is not sufficient to provide the solutions for the original Problem~\eqref{prob:original}. 
\subsection{Iterative Searching with Successive Approximations}
\label{sec:iterative}
In view of this, we propose an iterative searching method, with which the objective function is approximated successively in each iteration. In particular, {we first initialize the pair $(m^{(0)},p^{(0)})=(m_{init},p_{init})$ for the $1$-st iteration, where $m_{init}>0$ and $p_{init}>0$.} 
{ Note that those initial pair must also be feasible for Problem (23). Otherwise, the convergence rate may be impacted.}    
Then, in the $k$-th iteration, we construct Problem~\eqref{problem:reformulated} at the $k$-th local point $(\hat{m},\hat{p})=(m^{(k)},p^{(k)})$, with which the corresponding approximated LFP is defined as  $\hat{\varepsilon}^{(k)}_{LF}(m,p)=\hat{\varepsilon}_{LF}(m,p|m^{(k)},p^{(k)})$. According to Lemma~\ref{lemma:convex}, it can be solved optimally. We denote the solutions as $(m^{(k)}_{\text{opt}},p^{(k)}_{\text{opt}})$ and put it into~\eqref{eq:err_LF} to obtain the result in the $k$-th iteration, i.e., $\varepsilon^{(k)}_{LF}(m^{(k)}_{\text{opt}},p^{(k)}_{\text{opt}})$. Next, we assign $(m^{(k)}_{\text{opt}},p^{(k)}_{\text{opt}})$ as the local point of the next iteration, i.e., $(m^{(k+1)},p^{(k+1)})=(m^{(k)}_{\text{opt}},p^{(k)}_{\text{opt}})$. This process will be repeated until the gap between two iterations is smaller than a threshold $\mu_{\text{th}}$ with $|\varepsilon^{(k)}_{LF}(m^{(k)}_{\text{opt}},p^{(k)}_{\text{opt}})-\varepsilon^{(k+1)}_{LF}(m^{(k+1)}_{\text{opt}},p^{(k+1)}_{\text{opt}})|\leq \mu_{\text{th}}$. Then, we obtain the relaxed solution $(m^*_{\mathcal{R}},p^*)=(m^{(k)}_{\text{opt}},m^{(p)}_{\text{opt}})$. 
Finally, we reconstruct the solutions  by comparing the integer neighbors of the blocklength $m^*=\underset{m\in\{\lfloor m^*_{\mathcal{R}} \rfloor ,\lceil m^*_{\mathcal{R}} \rceil\}}{\text{arg\,max}} \varepsilon_{LF}(m,p^*)$, where $\lfloor\cdot\rfloor$ and $\lceil\cdot\rceil$ are the floor and ceiling functions, respectively.    
It should be emphasized that the stop criterion is based on the actual LFP instead of its approximation. Moreover, the convergence of the iteration is ensured with the following inequality chains:
\begin{equation}
    \begin{split}
    &\varepsilon_{LF}(m^{(k)},p^{(k)})=\hat{\varepsilon}^{(k)}_{LF}(m^{(k)},p^{(k)})\geq \hat{\varepsilon}^{(k)}_{LF}(m^{(k)}_{\text{opt}},p^{(k)}_{\text{opt}})\\
    \geq& \varepsilon_{LF}(m^{(k+1)},p^{(k+1)})=\hat{\varepsilon}^{(k+1)}_{LF}(m^{(k+1)},p^{(k+1)})\\
    \geq&\hat{\varepsilon}^{(k+1)}_{LF}(m^{(k+1)}_{\text{opt}},p^{(k+1)}_{\text{opt}})\geq \dots  
    \end{split}
\end{equation}
In other words, the objective can be reduced over iterations until it eventually converges to a sub-optimal solution. 
{Recall that the approximation of $\varepsilon_{LF}$ is tight at the local point. Therefore, since our proposed algorithm always converges, the tightness of the approximation is also guaranteed at the sup-optimal solution.} 
The gap between the obtained solution and the globally optimal solution is also investigated in Section~\ref{sec:simulations}. Moreover, the pseudo-code of the above algorithm is given in Algorithm~\ref{alg:iterative}. {Since Algorithm 1 solves a convex optimization problem in each iteration, the complexity can be represented as $\mathcal{O}(\theta(L)^4)$, where $\theta$ is the iteration number and $L$ is the number of variables.} 
\begin{algorithm}[!t]
\caption{Algorithm to solve Problem~\eqref{prob:original}}\label{alg:iterative}
\begin{algorithmic}[1]
\State Initialize a feasible pair $(m^{(0)},p^{(0)})$ and $k=1$.
\State Construct $\varepsilon^{(k)}$ according to~\eqref{eq:err_b_app},~\eqref{eq:err_e_app} and~\eqref{eq:delta_app}. 
\State Solve~\eqref{problem:reformulated} according to Lemma~\ref{lemma:convex} and get $(m^{(k)}_{\text{opt}},m^{(k)}_{\text{opt}})$.
\If { $|\varepsilon^{(k)}_{LF}(m^{(k)}_{\text{opt}},p^{(k)}_{\text{opt}})-\varepsilon^{(k+1)}_{LF}(m^{(k+1)}_{\text{opt}},p^{(k+1)}_{\text{opt}})|\leq \mu_{\text{th}}$ }
    \State $(m^*_{\mathcal{R}},p^*)=(m^{(k)}_{\text{opt}},m^{(k)}_{\text{opt}})$
    \Else
        \State $(m^{(k+1)},p^{(k+1)})=(m^{(k)}_{\text{opt}},p^{(k)}_{\text{opt}})$ and $r=r+1$.
        \State \textbf{Back to Step 2.}
\EndIf
\State Obtain the integer solution via $m^*=\underset{m\in\{\lfloor m^*_{\mathcal{R}} \rfloor ,\lceil m^*_{\mathcal{R}} \rceil\}}{\text{arg\,max}} \varepsilon_{LF}(m,p^*)$
\end{algorithmic}
\end{algorithm}
\subsection{Special Cases: Multiple Eves}
In machine-type communications, it is unlikely that Alice transmits information only to one destination. Instead, there may exist multiple devices, where one of the devices is the legitimate user (Bob) while the rest of them are potential eavesdroppers (Eves). In particular, consider a multi-node scenario, where ${N}$ Eves are available with the channel gain $h_{e,n}$, $\forall n\in\{1,\dots,{N}\}$. we investigate two types of Eve, i.e., passive Eves and super Eves. 

\underline{Passive Eves:} during the transmission between Alice and Bob, there are also other receivers close to Bob, which receive the broadcast signal at the same. They are not necessarily malicious, but still able to overhear the confidential message, i.e., as passive eavesdroppers~\cite{kapetanovic_passive_eve_2015}. Due to the impact of finite blocklength codes, each of the passive Eve $n$ has a possibility to incorrectly decode the packet, the event of which $X_{e,n}$ is independent to the events of others. Let $\varepsilon_{e,n}$ denote its error probability, the LFP with \underline{p}assive Eves can be rewritten as:
\begin{equation}
\begin{split}
    \label{eq:LFP_passive_eve}
    \varepsilon_{LF,P}&=1-\mathcal{P}(X_b=1\bigcup\limits_{n=1}^{N} X_{e,n}=1)=1-(1-\varepsilon_b)\prod^N_{n=1}\varepsilon_{e,n}\\
    &=\varepsilon_b\prod^N_{n=1}\varepsilon_{e,n}+(1-\prod^N_{n=1}\varepsilon_{e,n}),
\end{split}
\end{equation}
where $\varepsilon_{e,n}= Q\Big( {\sqrt {\frac{m}{V(\gamma{e,n})}} ( {{\mathcal{C}}(\gamma_{e,n} ) \!-\! \frac{d}{m})} \ln2} \Big)$ with $\gamma_{e,n}=\frac{p|h_{e,n}|^2}{\sigma^2_e}$ according to~\eqref{eq:error_tau}. 
Due to the negative multiplication of error probability of each Eve, i.e., $1-\prod^N_{n=1}\varepsilon_{e,n}$, Lemma~\ref{lemma:app} does not hold for LFP with passive Eves, which prevents the direct application of the aforementioned optimization framework. In particular, Lemma~\ref{lemma:app} requires each function $f_i(\boldsymbol{x})$ to be non-negative. To tackle this issue, we reformulate it with the following manipulations: 
\begin{equation}
\label{eq:LFP_passive_eve_reformulated}
\begin{split}
    1-\prod^N_{n=1}\varepsilon_{e,n}
    =&1+(1-\varepsilon_{e,1})\prod^N_{n=2}\varepsilon_{e,n}-\prod^N_{n=2}\varepsilon_{e,n}\\
    =&1+(1-\varepsilon_{e,1})\prod^N_{n=2}\varepsilon_{e,n}\\
    &+(1-\varepsilon_{e,2})\prod^N_{n=3}\varepsilon_{e,n}-\prod^N_{n=2}\varepsilon_{e,n}\\&...\\
    =&\sum^{{N}}_{n=1}(1-\varepsilon_{e,n})\prod^{{N}+1}_{i=n+1}\varepsilon_{e,i}.
\end{split}
\end{equation}
In particular, to ease the notation, we define $\varepsilon_{e,{N}+1}=1$. Clearly, all the terms in~\eqref{eq:LFP_passive_eve_reformulated} is non-negative, since it always holds that $0\leq \varepsilon_{e,n}\leq 1$, $\forall n \in\{1,\dots,{N}+1\}$.  Then, according to Lemma~\ref{lemma:app} and Lemma~\ref{lemma:Q_app}, we can establish the following approximation:
\begin{equation}
\label{eq:err_FL_app_eves}
    \varepsilon_{LF,P} \lesssim \hat{\varepsilon}_b\sum^{{N}}_{n=1}\hat{\varepsilon}_{e,n}+  \sum^{{N}}_{n=1}\hat{\delta}_{n}\sum^{{N}+1}_{i=n+1}\hat{\varepsilon}_{e,i},
\end{equation}
where
$
    \hat{\varepsilon}_{e,n}=
        b(\hat{\omega})e^{-a(\hat{\omega})\omega_{e,n}}+c(\hat{\omega}), 
$ 
and
$ 
    \hat{\delta}_{n}=
    b(-\hat{\omega})e^{a(-\hat{\omega})\omega_{e,n}}+c(-\hat{\omega})
$, $\forall n$. 
Then, we  follow the methodology in Section~\ref{sec:iterative} to solve the optimization problem with passive Eves by replacing~\eqref{eq:err_FL_app} with~\eqref{eq:err_FL_app_eves} in the optimization problem. To avoid repetition, we omit the details.

\underline{Super Eve:} However, if some of those receivers are malicious, e.g., as internal attackers, they may collude with each others. Consider a worst case, where each Eve is able to share their received signal perfectly so that  maximum-ratio combining (MRC) can be applied~\cite{ahn_MRC_2009,Wang_PLS_convert_2022}. Then, all ${N}$ collusive Eves with channel gain of $z_{e,n}$ can be considered as a combined super Eve with channel gain $z_{e,super}=\sum^N_{n=1}z_{e,n}$. Therefore, the optimization framework in Section~\ref{sec:joint} can be directly applied. However, it should be emphasized that, despite  the simplicity of the analytical model, super Eve is significantly difficult to be dealt with
compared with passive Eve, which will be observed and discussed in Section~\ref{sec:simulations}.

\section{Blocklength Allocation with Reliable-Secure Requirements} 
{In Section II and III, we study a minimal setup in the optimization problem to reveal the fundamental insight for the reliability-security tradeoff. For example, unconstrained reliability and security for the transmission, as well as  the assumption of the perfect CSI. Moreover, in the practical URLLC system, power control may not be an option for Alice due to the simple circuit while the blocklength allocation being in our interest. Therefore, in this section, we focus on the blocklength allocation with reliable-secure requirements, while the impact of different CSI on the system is discussed.}  
\subsection{Problem Reformulation and the Optimal Solution}
\label{sec:reliable_secure}
Consider a similar scenario as in Section~\ref{sec:joint}, where Alice transmits the message to Bob and Eve is the eavesdropper. {We still aim at minimizing the LFP $\varepsilon_{LF}$ by allocating the blocklength $m$.}  {Moreover, we also consider that the transmission has specific requirements on the reliability and security, so that $1)$ the transmission rate is larger than the Shannon's capacity of Eve so that the information leakage probability is  sufficiently low in URLLC, i.e., $\delta\leq \delta_{\max}\leq 0.5$, where $\delta_{\max}$ is the tolerance of leakage; 2) and {the transmission rate is less than the Shannon's capacity  of Bob so that the reliability is sufficiently high in URLLC}, i.e., $\varepsilon_b\leq \varepsilon_{b,\text{max}}\leq 0.5$, where $\varepsilon_{e,\text{min}}$ is a given threshold of the error probability. Then, the optimization problem~\eqref{prob:original} can be rewritten as:} 
\begin{mini!}
{m}{\varepsilon_{LF}}
{\label{prob:blocklength}}{}
\addConstraint{\delta\leq\delta_{\max}\label{con:err_e}}
\addConstraint{\varepsilon_b\leq\varepsilon_{b,\text{max}}\label{con:err_b}}
\addConstraint{m\in\mathbb{N}_+.\label{con:m_int_blocklength}}
\end{mini!}
Although the proposed optimization framework in the previous section can also be applied to solve Problem~\eqref{prob:blocklength}, in this section, we provide an approach with lower complexity for this special case. In particular, we also relax blocklength to real value, i.e., $m\in\mathbb{N}_+\to m\in\mathbb{R}_+$. Then, instead of approximating $\varepsilon_{LF}$, we directly establish the following corollary based on Lemma~\ref{lemma:convex}:
\begin{theorem}
\label{lemma:blocklength_convex}
    With the relaxation of blocklength $m\in\mathbb{R}_+$, Problem~\eqref{prob:blocklength} is convex.
\end{theorem}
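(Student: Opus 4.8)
The plan is to show that Problem~\eqref{prob:blocklength} becomes convex after relaxing $m\in\mathbb{N}_+$ to $m\in\mathbb{R}_+$, by separately verifying that the objective $\varepsilon_{LF}$ is convex in $m$ over the feasible region and that the constraint set carved out by~\eqref{con:err_e} and~\eqref{con:err_b} is convex. The key enabling observation is that, for fixed transmit power $p$, each single-transmission error probability is $\varepsilon = Q(\omega(m))$ with $\omega$ the auxiliary function of Lemma~\ref{lemma:convex}; under the standing condition~\eqref{eq:convex_condition}, $\omega(m)$ is concave in $m$. This is the structural fact I would lean on throughout.

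First I would treat the constraints. The leakage constraint $\delta\le\delta_{\max}$ with $\delta=1-\varepsilon_e=1-Q(\omega_e(m))$ can be rewritten as $Q(\omega_e(m))\ge 1-\delta_{\max}$, i.e. $\omega_e(m)\le Q^{-1}(1-\delta_{\max})$; since $\delta_{\max}\le 0.5$ this threshold is nonnegative, and because $\omega_e$ is concave a superlevel/sublevel argument must be made carefully — the set $\{m:\omega_e(m)\le \text{const}\}$ is the sublevel set of a concave function, which need not be convex in general, so I would instead exploit monotonicity (Lemma~\ref{lemma:mono}): $\delta$ is monotone in $m$, so the feasible $m$ satisfying $\delta\le\delta_{\max}$ form a half-line, hence an interval. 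The same reasoning applies to $\varepsilon_b\le\varepsilon_{b,\max}$ via monotonicity of $\varepsilon_b$ in $m$. The intersection of two intervals in $\mathbb{R}_+$ is an interval, hence convex.

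Next I would handle the objective. Writing $\varepsilon_{LF}=\varepsilon_b\varepsilon_e+(1-\varepsilon_e)$ from~\eqref{eq:err_LF}, I would argue convexity of $m\mapsto\varepsilon_{LF}(m)$ restricted to the feasible interval. Here the thresholds do real work: the constraints pin $\varepsilon_b\le 0.5$ and $\varepsilon_e\ge 1-\delta_{\max}\ge 0.5$, which place both $\omega_b$ and $\omega_e$ on the branches where $Q$ has favorable curvature. Since $\omega_b,\omega_e$ are concave in $m$ (Lemma~\ref{lemma:convex}) and $Q$ is convex and decreasing on the relevant branch, each $\varepsilon=Q(\omega(m))$ is convex in $m$ by the composition rule~\cite{Boyd_2004_convex}; then $1-\varepsilon_e$ is convex, and I would establish convexity of the product term $\varepsilon_b\varepsilon_e$ over the feasible range, using that both factors are positive, convex, and monotone in the same direction there.

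I expect the main obstacle to be the product term $\varepsilon_b\varepsilon_e$: a product of two convex functions is not convex in general, so I cannot invoke the composition rule directly. The cleanest route is to show that on the feasible interval both $\varepsilon_b$ and $\varepsilon_e$ are monotone \emph{decreasing} in $m$ (by Lemma~\ref{lemma:mono}) and that each is log-convex, since log-convexity is preserved under products and log-convex functions are convex. I would therefore verify $\log Q(\omega(m))$ is convex in $m$ on the branch fixed by the thresholds, reducing the whole argument to a single curvature check that composes with the already-established concavity of $\omega$; summing the convex terms then yields convexity of $\varepsilon_{LF}$ and completes the proof.
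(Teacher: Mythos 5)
Your treatment of the constraints is sound, and in the one-dimensional setting actually simpler than the paper's: after relaxation the only variable is $m$, so monotonicity from Lemma~\ref{lemma:mono} ($\varepsilon_b$ decreasing, $\delta$ increasing in $m$) makes each of~\eqref{con:err_e} and~\eqref{con:err_b} a half-line in $m$, and their intersection an interval. The paper instead certifies the constraint functions by curvature ($\varepsilon_b$ convex, $\varepsilon_e$ concave), but your half-line argument is valid here, and your caution about sublevel sets of concave functions was the right instinct.

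The objective, however, contains a genuine error that sinks the argument. On the feasible set, $\delta\le\delta_{\max}\le 0.5$ forces $\varepsilon_e\ge 0.5$, i.e.\ $\omega_e\le 0$ and $\mathcal{C}_e\le r$, and on that branch $Q$ is \emph{concave}, not convex — so the composition rule you invoke gives nothing for $\varepsilon_e$ (the paper's proof states this explicitly and resorts to a direct second-derivative computation, which shows that $\varepsilon_e$ is in fact \emph{concave} in $m$ there). This breaks your proposal twice over. First, there is an internal inconsistency: you assert both that $\varepsilon_e$ is convex and that $1-\varepsilon_e$ is convex, which cannot simultaneously hold unless $\varepsilon_e$ is affine. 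Second, the log-convexity rescue for the product $\varepsilon_b\varepsilon_e$ cannot work: a positive, strictly decreasing, concave function satisfies $\varepsilon_e\,\varepsilon_e''\le 0<(\varepsilon_e')^2$, hence $(\log\varepsilon_e)''<0$, so $\varepsilon_e$ is provably not log-convex on the feasible interval (and $Q$ itself is globally log-concave). More fundamentally, the decomposition ``convex product plus convex remainder'' is not the mechanism at all — the product term alone need not be convex. The paper differentiates the full objective $\varepsilon_{LF}=\varepsilon_b\varepsilon_e+(1-\varepsilon_e)$ as a whole:
\begin{equation*}
\frac{\partial^2 \varepsilon_{LF}}{\partial m^2}
= \underbrace{\frac{\partial^2 \varepsilon_b}{\partial m^2}}_{\ge 0}\,\varepsilon_e
+ 2\,\underbrace{\frac{\partial \varepsilon_b}{\partial m}}_{\le 0}\,\underbrace{\frac{\partial \varepsilon_e}{\partial m}}_{\le 0}
+ \underbrace{(\varepsilon_b-1)}_{\le 0}\,\underbrace{\frac{\partial^2 \varepsilon_e}{\partial m^2}}_{\le 0}\;\ge\;0,
\end{equation*}
where your composition argument does work for the first term (the constraint $\varepsilon_b\le\varepsilon_{b,\max}\le 0.5$ puts $\omega_b\ge 0$, where $Q$ is convex and decreasing, so $\varepsilon_b$ is convex), the cross term is a product of two nonpositive derivatives, and — crucially — the concavity of $\varepsilon_e$ enters with the negative coefficient $\varepsilon_b-1$, turning what you treated as an obstacle into exactly the property that makes the objective convex. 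The missing idea, then, is to establish concavity of $\varepsilon_e$ on the leakage-pinned branch by direct computation and let the $(1-\varepsilon_e)$ term absorb it, rather than attempting to certify $\varepsilon_b\varepsilon_e$ and $1-\varepsilon_e$ as convex separately.
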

\begin{proof}
First, we characterize the convexity/concavity of $\varepsilon_b$ and $\varepsilon_e$. Note that $\varepsilon_b\leq \varepsilon_{b,\text{max}}$ indicates $\mathcal{C}_b\geq d/m=r$. According to~\eqref{eq:d2w_dm2} and the composition rule of convexity, $\varepsilon_b$ is convex in $m$, which can be shown as
\begin{equation}
    \label{eq:d2err_b_dm2}
    \frac{\partial^2 \varepsilon_b}{\partial m^2}
    =\underbrace{\frac{\partial^2 \varepsilon_b}{\partial \omega^2_b}}_{\geq 0}
    \left(
        \frac{\partial \omega_b}{\partial m}
    \right)^2
    +\underbrace{\frac{\partial \varepsilon_b}{\partial \omega_b}}_{\leq 0} \underbrace{\frac{\partial^2 \omega_b}{\partial m^2}}_{\leq 0}\geq 0.
\end{equation}
It coincides with the results in~\cite{Zhu_convex_2022}. Moreover, we can deduce that constraint~\eqref{con:err_b} is also convex. On the other hand, $\varepsilon_b\leq \varepsilon_{b,\text{max}}$ implies $\mathcal{C}_b\leq d/m=r$. Therefore, the composition rule can not be applied for $\varepsilon_e$. In particular, its second derivative is given by
\begin{equation}
\begin{split}
    \frac{\partial^2 \varepsilon_e}{\partial m^2}
    =&\frac{\partial^2 \varepsilon_e}{\partial \omega^2_e}
    \left(
        \frac{\partial \omega_e}{\partial m}
    \right)^2
    +\frac{\partial \varepsilon_e}{\partial \omega_e}\frac{\partial^2 \omega_e}{\partial m^2}\\
    =&\frac{1}{4\sqrt{V_em^3}}
    \left(
        (\mathcal{C}_e+r)^2\omega_e+(\mathcal{C}_e+3r)
    \right)\\
    \overset{V_e\leq 1}{\leq}&\frac{1}{4\sqrt{V_e m^3}}
    \left(
        (\mathcal{C}_e+r)^2\sqrt{m}\underbrace{(\mathcal{C}_e-r)}_{\leq 0}+(\mathcal{C}_e+3r)
    \right)\\
    =&\frac{1}{4\sqrt{V_em^3}}
    \underbrace{-r^3-\mathcal{C}_er^2+\left(\frac{3}{m^5}+\mathcal{C}_e^2\right)+\frac{\mathcal{C}_e}{\sqrt{m}}+\mathcal{C}_e^3}_{g(r)},
\end{split}
\end{equation}
where $\mathcal{C}_e=\mathcal{C}(\gamma_e)$ and  {$V_e=V(\gamma_e)=1-(1+\gamma_e)^{-2}\leq 1$}. We show that $h(r)$ is a monotonically decreasing function with respect to $r\geq0$ via its derivative:
\begin{equation}
    \frac{\partial h(r)}{\partial r}=\left(\frac{3}{\sqrt{m^5}}+\mathcal{C}_e^2\right)-2\mathcal{C}_er-3r^2,
\end{equation}
which is a second polynomial function that opens downward and has no real root, since it holds that $\left(2\mathcal{C}_e\right)^2-4\cdot3 \left(\frac{3}{\sqrt{m^5}}+\mathcal{C}_e^2\right)\leq 0$. Recall that $0\leq \mathcal{C}_e\leq r$. In other words, if $r=0$, it must also hold that $\mathcal{C}_e=0$. Then, we have $h(r)\leq h(0)\leq 0$. It implies that 
\begin{equation}
    \frac{\partial^2 \varepsilon_e}{\partial m^2}=\frac{1}{4\sqrt{V_em^3}}h(r)\leq 0,
\end{equation}
which means that $\varepsilon_e$ is concave in $m$ and constraint~\eqref{con:err_e} is convex. 
Accordingly, we can conclude the objective function, $\varepsilon_{LF}$, is convex in $m$ by showing 
\begin{equation}
    \begin{split}
    \frac{\partial^2 \varepsilon_{LF}}{\partial m^2}
    =&\underbrace{\frac{\partial^2 \varepsilon_b}{\partial m^2}}_{\geq 0}\varepsilon_e
     +\underbrace{\frac{\partial \varepsilon_b}{\partial m}}_{\leq 0}
        \underbrace{\frac{\partial \varepsilon_e}{\partial m}}_{\leq 0}
     +\underbrace{(\varepsilon_b-1)}_{\leq 0}\underbrace{\frac{\partial^2 \varepsilon_e}{\partial m^2}}_{\leq 0}\\
    \geq&0.
     \end{split}
\end{equation}
As a result, the objective function and all constraints are convex, i.e., Problem~\eqref{prob:blocklength} is convex. 
\end{proof}
According to Lemma~\ref{lemma:blocklength_convex}, we can solve Problem~\eqref{prob:blocklength} efficiently with standard convex optimization tools. Compared with Algorithm~\ref{alg:iterative}, the convex problem only needs to be solved once with neither approximation nor iteration. Moreover, the obtained solutions are guaranteed to be the global optimum. However, we should emphasize that Lemma~\ref{lemma:blocklength_convex} is only valid with blocklength allocation under a reliable-secure requirements, which is less general than the problem considered in Section~\ref{sec:joint}.  

{Moreover, it is also worth to mention that those analytical results can be applied for other objective functions. 
For example, we can also maximize the secrecy effective throughput, which is defined as $\tau_{LF}=\frac{d}{m}(1-\varepsilon_{LF})$}\footnote{It should be emphasized that the secrecy effective throughput is different from the one in~\cite{Feng_outage_2022}, which is based on the outage probability, also different from the one in~\cite{Wang_PLS_throughput_2019}, which is based on a fixed leakage probability.}. 
{Then, we have the following modified problem: 
\begin{maxi!}
{m}{\tau_{LF}}
{\label{prob:blocklength_tau}}{}
\addConstraint{\eqref{con:err_e},\ \eqref{con:err_b}\  \text{and}\ \eqref{con:m_int_blocklength}.\nonumber}
\end{maxi!} 
We have the following Corollary to characterize it:
\begin{theorem1}
    Problem~\eqref{prob:blocklength_tau} is quasi-concave. 
\end{theorem1}
\begin{proof}
We have proven that $\varepsilon_{LF}$ is a convex function in Lemma~\ref{lemma:blocklength_convex}. Moreover, it is clear that $1/m$ is positive and convex. It means that $\tau_{LF}=d\frac{1-\varepsilon_{LF}}{m}$, which can be considered as a concave function divided by a positive and convex function, is quasi-concave~\cite{Boyd_2004_convex}.  
\end{proof}
Therefore, it exists a unique locally optimal solution in Problem~\eqref{prob:blocklength_tau}, which is the same as the globally optimal solution. It implies that Problem~\eqref{prob:blocklength_tau} can be solved efficiently, e.g., via a greedy search.} 
{The similar conclusion can be drawn for the cases with multiple transmissions. For total $T$ transmission, the effective secrecy throughput is given by $\tau=\frac{1}{T}\sum^T_{t=1}\frac{d(t)}{m}(1-\varepsilon(t))$, where $d(t)$ is the packet size at $t$-th transmission, $\forall t\in\{1,\dots,T\}$. In other words, $\tau_{LF}$ can be approximated with the sum of lower-bounded concave functions $\hat{\tau}_{LF}$, since every transmission is independent to each other.} 

\subsection{Discussions with statistical CSI of Eve}
In the previous sections, we assume that the perfect CSI of Eve is known. However, in practice, we may only know the statistical CSI instead. In particular, let $z_e=|h_e|^2$ denote the channel gain of Eve and $f_Z(z_e)$ the corresponding probability density function (PDF), the (expected) leakage-failure probability $\mathbb{E}_z[\varepsilon_{LF}]$ with the statistical CSI can be rewritten as:
\begin{equation}
    \label{eq:err_LF_int}
    \mathbb{E}_z[\varepsilon_{LF}]=\int^\infty_0\varepsilon_{LF}(m,p|z_e)f_Z(z_e)dz_e.
\end{equation}
Similarly, the error probability of Eve is given by $\mathbb{E}_z[\varepsilon_{e}]=\int^\infty_0\varepsilon_{e}(m,p|z_e)f_Z(z_e)dz_e$. 
{{2.4-3}The integral form does not influence the validity of the optimization framework of Problem~\eqref{prob:original} in Section~\ref{sec:joint}, since the convexity of the approximated LFP $\hat{\varepsilon}_{LF}$ introduced in~\eqref{eq:err_LF_app} does not depend on $z_e$. Then, $\mathbb{E}_z[\hat{\varepsilon}_{LF}]$ can be considered as a continuous sum of convex functions, and therefore we can still approximate $\mathbb{E}_z[{\varepsilon}_{LF}]$ with $\mathbb{E}_z[\hat{\varepsilon}_{LF}]$ for any given constants $(\hat{m},\hat{p})$ without influencing the convexity.} However, the validity of Lemma~\ref{lemma:blocklength_convex} has to be revisited, since the convexity/concavity of $\varepsilon_e$  only holds in a certain feasible range of $z_e$. This is due to the fact that Q-function in~\eqref{eq:error_tau} $Q(\omega)$ is a function that is first convex and then concave depending on $\omega$. To tackle this issue, we follow the methodology of our previous work~\cite{Zhu_convex_2022} to establish the convexity of $\mathbb{E}_z[\varepsilon_{LF}]$ with the statistical CSI with the following corollary: 


\begin{theorem1}
\label{corollary:fading_convex}
The convexity of $\mathbb{E}_z[\varepsilon_{LF}]$ with respect to blocklength $m$ is valid with statistical CSI of Eve.
\end{theorem1}
\begin{proof}
The second derivative of $\mathbb{E}_z[\varepsilon_{LF}]$ is given by
\begin{equation}
\begin{split}
    \frac{\partial^2 \mathbb{E}_z[{\varepsilon}_{LF}]}{\partial m^2}
    =&\underbrace{\frac{\partial^2 \varepsilon_b}{\partial m^2}}_{\geq 0}\mathbb{E}_z[\varepsilon_e]
     +\underbrace{\frac{\partial \varepsilon_b}{\partial m}}_{\leq 0}
        \underbrace{\frac{\partial \mathbb{E}_z[\varepsilon_e]}{\partial m}}_{\leq 0}
     +\underbrace{(\varepsilon_b-1)}_{\leq 0}\frac{\partial^2 \mathbb{E}_z[\varepsilon_e]}{\partial m^2}
\end{split}
\end{equation}
It is trivial to show that $\mathbb{E}_z[\varepsilon_e]$ is monotonically decreasing in $m$, i.e., $\frac{\partial \mathbb{E}_z[\varepsilon_e]}{\partial m}\leq0$, since $\varepsilon_e$ is decreasing in $m$ regardless of $z_e$. Therefore, we focus on determining whether $\mathbb{E}_z[\varepsilon_e]$ is still concave in $m$. In particular, let $\bar{z}_e$ denote the average channel gain of Eve and $\bar{z}_{\text{th}}$ is a threshold, with which  the transmission rate is just higher than the instantaneous Shannon's capacity. Recall that there is a constraint so that $\mathbb{E}_z[\varepsilon_e]\geq\varepsilon_{e,\text{min}}\geq 0.5$. It must hold that $\bar{z}_e\leq \bar{z}_{\text{th}}$. Otherwise, the constraint cannot be fulfilled. Therefore, we have
\begin{equation}
\begin{split}
\label{eq:d2e_dx2}
	\frac{\partial ^2\mathbb{E}[\varepsilon_e]}{\partial m^2} 
	=&\underbrace{{\int^{{\bar{z}_e}}_0
        \frac{\partial^2\varepsilon}{\partial x^2}f_Z(z_e)
        dz_e}}_{\leq 0}
        +\underbrace{{\int^{\bar{z}_{\text{th}}}_{{\bar{z}_e}}
            \frac{\partial^2\varepsilon}{\partial x^2}f_Z(z_e)
        dz_e}}_{\leq 0}\\
        &+\underbrace{\int^{\mathbf{\infty}}_{\bar{z}_{\text{th}}}
            \frac{\partial^2\varepsilon_e}{\partial x^2}f_Z(z_e)
        dz}_{\geq 0}.
\end{split}
\end{equation}
Moreover, since $\bar{z}_e$ is the average channel gain, the cumulative distribution function (CDF) can be broken down as follows
\begin{equation}
\begin{split}
\label{eq:ineq_1}
    \int^{\bar{z}_{\text{th}}}_{0}f_Z(z_e)dz_e&\geq \int^{\bar{z}_e}_{0}f_Z(z_e)dz_e=\int^{\infty}_{\bar{z}_e}f_Z(z_e)dz_e=\frac{1}{2}\\&\geq \int^{\infty}_{\bar{z}_{\text{th}}}f_Z(z_e)dz_e.
\end{split}
\end{equation}
Therefore, we have
\begin{equation}
    \begin{split}
        \left|
            \int^{\bar{z}_{\text{th}}}_0\frac{\partial ^2\varepsilon_e}{\partial m^2}  dz_e
        \right|
        &\geq
        \left|
            \int^{\infty}_{\bar{z}_{\text{th}}}\frac{\partial ^2\varepsilon_e}{\partial m^2}  dz_e
        \right|\\
        \Longleftrightarrow\left|
            \int^{\bar{z}_{\text{th}}}_0\frac{\partial ^2\varepsilon_e}{\partial m^2}  f_Z(z_e) dz_e
        \right|
        &\geq
        \left|
            \int^{\infty}_{\bar{z}_{\text{th}}}\frac{\partial ^2\varepsilon_e}{\partial m^2}  f_Z(z_e) dz_e
        \right|,
    \end{split}
\end{equation}
which implies that $\frac{\partial ^2\mathbb{E}[\varepsilon_e]}{\partial m^2}\leq 0$. Therefore, $\mathbb{E}[\varepsilon_e]$ is indeed concave in $m$, i.e., $\mathbb{E}[\varepsilon_{LF}]$ is convex in $m$.
\end{proof}
According to Corollary~\ref{corollary:fading_convex}, Problem~\eqref{prob:blocklength} with statistical CSI of Eve is convex. 
{Note that the validity of Corollary 3 does not depend on the distribution of $z_e$, i.e., $f_Z(z_e)$ as long as the average of $z_e$ is lower than the threshold $\bar{z}_e$. Therefore, 
Corollary 3 can also be applied for the cases where Eve equips multiple antennas. In this case, only the distribution of $z_e$ will changes and it follows gamma distribution~\cite{Feng_outage_2022}. The same conclusion can be drawn with imperfect CSI. } 
{Moreover, although we focus on the performance of single transmission in this work, Corollary 3 actually implies that we can extend the system model to multiple transmissions with fixed packet size $d$. In particular, denote $t=\{1,\dots,T\}$ is the index of total $T$ transmissions. Then, we have  $\mathbb{E}_t[\varepsilon_{LF}]=\mathbb{E}_z[\varepsilon_{LF}]$, since the time-sharing condition is satisfied~\cite{Yu_2006_dual,Zhu_2022_NOMA}. 
Therefore, we can conclude that all our previous analytical results still hold regardless of the CSI or the number of transmissions.} 
\section{Simulation Results}
\label{sec:simulations}
In this section, we validate the analytical results and investigate the system performance in the considered scenarios via numerical simulations. 
Unless specifically mentioned otherwise, we have the following setups for the simulation: We consider a normalized scenario, where we set the channel gain of Bob to $z_b=|h_b|^2=1.5$ and the channel gain of Eve to $z_e=|h_e|^2=1$. The noise power level is set as $\sigma^2_e=\sigma^2_b=0.1$. The transmitted packet size $d$ is $320$ bits. {Moreover, the available blocklength and transmit power is set to be sufficiently large with $3000$ channel uses and $10$W, respectively.} 
{The results of the integer programming for solving Problem (9) is shown as a benchmark, which provides globally optimal solutions. }

\begin{figure}[!t]
    \centering
\includegraphics[width=0.85\textwidth,trim=0 10 0 0]{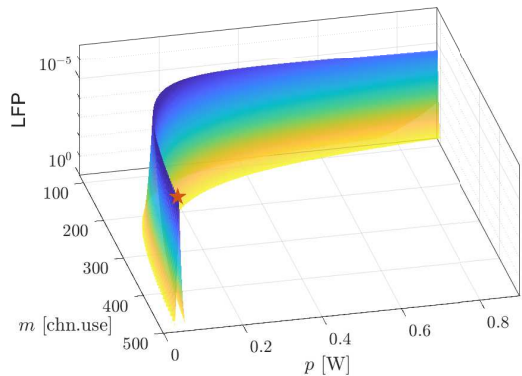}
\caption{LFP $\varepsilon_{LF}$ versus blocklength $m$ and transmit power $p$. For the sake of clarity, the points where $\varepsilon_{LF}\geq 0.5$ are omitted. The mark indicates the minimum of $\varepsilon_{LF}$. Moreover, the z-axis is reversed. }
\label{fig:LFP_vs_p_m}
\end{figure}

First, we illustrate the relationship between the LFP and its variables in Fig.~\ref{fig:LFP_vs_p_m}. In particular, we plot the values of the LFP  $\varepsilon_{LF}$ in a logarithmic scale in a reversed manner with the corresponding transmit power $p$ and blocklength $m$. For the sake of clarity, we omit the points where  $\varepsilon_{LF}\geq 0.5$. In fact, according to~\eqref{eq:err_LF}, if  $\varepsilon_{LF}\geq 0.5$, it implies that either the decoding error probability of Bob $\varepsilon_b\geq 0.5$ or the error probability of Eve $\varepsilon_e \leq 0.5$, i.e., the transmission is either unreliable or insecure. 
{As expected, $\varepsilon_{LF}$ is non-convex. It means that the standard convex method is insufficient to solve the optimization problem which involves $\varepsilon_{LF}$. This motivates us to investigate a more efficient optimization framework. 
On the other hand, with the practical leakage and error probability thresholds (corresponding to Problem~\eqref{prob:blocklength}),  $\varepsilon_{LF}$ is indeed partially convex in blocklength $m$. This phenomenon confirms our analytical findings in Lemma~\ref{lemma:blocklength_convex}.} 
However, it should be pointed out that it cannot be directly extended to the joint optimization of $m$ and $p$. This is due to the fact that the feasible set that fulfills $\varepsilon_{b}\leq \varepsilon_{b,\max}$ and $\varepsilon_{e}\leq \varepsilon_{e,\min}$, which are the reliable-secure requirements we discussed in Section~\ref{sec:reliable_secure}, is actually non-convex. 
Therefore, it is important to investigate the efficiency of our proposed optimization framework, as well as the performance gap compared with the optimal solutions, which is marked as red in Fig.~\ref{fig:LFP_k_vs_k}.

 \begin{figure}[!t]
    \centering
\includegraphics[width=0.85\textwidth,trim=0 10 0 0]{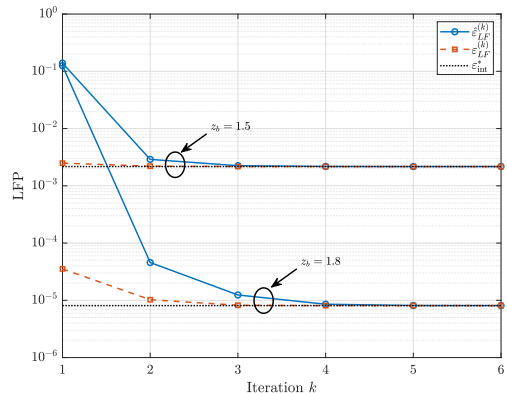}
\caption{{Obtained actual LFP $\varepsilon^{(k)}_{LF}$ and approximated $\hat{\varepsilon}^{(k)}_{LF}$ according to Algorithm 1 versus iteration $k$ under variant SNR of Bob $z_b=\{1.5, 1.8\}$. Moreover, the globally optimal results $\varepsilon^*_{int}$ obtained via integer programming are also shown as a benchmark.}}
\label{fig:LFP_k_vs_k}
\end{figure}
 
In Fig.~\ref{fig:LFP_k_vs_k}, we show the LFP obtained in each $k$-th iteration. To demonstrate the performance of the convergence for the proposed optimization framework, we plot both approximated LFP $\hat{\varepsilon}^{(k)}_{LF}$ obtained via solving the convex problem~\eqref{problem:reformulated} and actual LFP $\varepsilon^{(k)}_{LF}$ based on~\eqref{eq:err_LF}. 
{Moreover, we also show the optimal values $\varepsilon^*_{int}$, which is obtained via integer programming as the benchmark. We can observe that  $\hat{\varepsilon}^{(k)}_{LF}$ in the initial rounds are far away from the actual optimum, since it has to guarantee to be the lower-bound of the actual LFP. However, our proposed optimization framework converges quickly at a sub-linear rate. Moreover, the tightness is ensured at the converged point while  $\varepsilon^{(k)}_{LF}$ achieves the globally optimal result $\varepsilon^*_{int}$.} 
It should be mentioned that the convergence and the optimality are not influenced by the setups. For example, by varying the channel gain of Bob $z_b$ from $1.5$ to $1.8$, despite of the different values of LFP, both iterations converge at almost at the same round and achieve a (nearly) optimal solution. 
 
\begin{figure}[!t]
    \centering
\includegraphics[width=0.84\textwidth,trim=0 10 0 0]{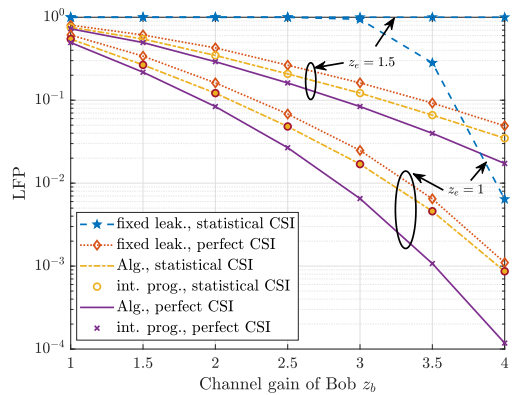}
\caption{{Minimized LFP $\varepsilon^*_{LF}$ versus channel gain of Bob $z_b$ under variant channel gains of Eve $z_e$. Both cases with perfect CSI and imperfect CSI are presented. The lines indicate solutions obtained via our proposed Algorithm~1 while the markers represent solutions obtained via integer programming. Moreover, the performance of LFP with fixed leakage threshold $\delta\leq 0.001$ is also shown as the benchmark.}}
\label{fig:LFP_vs_z_b}
\end{figure}

{
Next, we further validate the performance of our proposed approach under different setups. In Fig.~\ref{fig:LFP_vs_z_b}, we plot the minimized LFP $\varepsilon^*_{LF}$ the channel gain of Bob $z_b$ under variant channel gain of Eve, where the performance under both perfect CSI and statistical CSI is depicted, where the number of channel realizations is set as 5000. Furthermore, the minimized LFP obtained via integer programming (labeled as int. prog.), as well as via fixing the leakage threshold $\delta\leq 0.001$ (labeled as fixed leak.) is also shown as the performance benchmark. 
It should be pointed out that minimizing LFP with fixing the leakage threshold is equivalent to minimizing the error probability of Bob. Therefore, compared with Algorithm 1, the optimization problem can be more efficiently solved according to Lemma~\ref{lemma:convex}. However, its performance gap to the optimal results is also significant.  
Obviously, with the increase of $z_b$, $\varepsilon^*_{LF}$ decreases. This is not only due to the fact that larger gap between $z_b$ and $z_e$ results in higher secrecy capacity $\mathcal{C}_s=\mathcal{C}(z_b)-\mathcal{C}(z_e)$, but also thanks to the more significant tradeoff between reliability and security. In other words, via reducing each resource allocated to the transmission, we can increase $\varepsilon_e$ more, i.e., more improved security, with 
a little drop of $\varepsilon_b$, i.e., less loss of reliability. This observation implies that we should carefully allocate the available resources accordingly.}  
{More importantly, regardless of the CSI setups, similar to the results in the previous figure, the performance of our proposed algorithm is close to the performance of the integer programming. Those results also confirm the analytical findings in Section IV-B, i.e., both optimization framework and the partial convexity of LFP are valid regardless of the channel knowledge.  
Therefore, for the sake of clarity, we only show the results obtained with our algorithm in the rest of the figures.}

\begin{figure}[!t]
    \centering
\includegraphics[width=0.83\textwidth,trim=0 10 0 0]{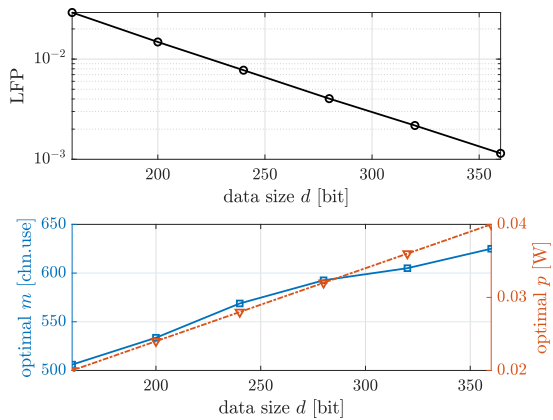}
\caption{Minimized LFP $\varepsilon^*_{LF}$ versus data size $d$ while the optimal solutions of blocklength $m^*$ and optimal transmit power $p^*$ also being depicted.}
\label{fig:LFP_vs_d_mp_vsd}
\end{figure}

According to~\eqref{eq:error_tau}, both $\varepsilon_e$ and $\varepsilon_b$ are also significantly influenced by the packet size of the transmission $d$ in a similar trend. In Fig.~\ref{fig:LFP_vs_d_mp_vsd}, we show the impact of $d$ on the system performance by plotting the minimized LFP versus data size $d$. In order to gain more insight for resource allocation schemes, we also plot the blocklength $m$ and transmit power $p$ when it achieves the minimized LFP. Interestingly, the figure shows that increasing data size $d$ actually improves the reliable-secure performance of the system. Despite that the behavior seems to be counter-intuitive, it actually fits the concept of trading reliability for security we emphasized in this work. In particular, under the condition that the channel gain of Bob is better than  channel gain of Eve, i.e., $z_b\geq z_e$, the negative influence on the decoding error probability can be compensated by allocating more radio resource, if it is available, as shown in the bottom sub-figure of Fig.~\ref{fig:LFP_vs_d_mp_vsd}. However, such  compensation is less significant for Eve, since its channel gain is worse. Therefore, a larger $d$ requires more compensation, which then causes larger performance gap between Bob and Eve in terms of decoding error probability. As a result, minimized $\varepsilon_{LF}$ decreases if we increase $d$. It should be pointed out that this phenomenon should be considered as a numerical observation instead of providing guidance for actual resource allocation schemes. {First, the packet size in machine-type communications is given, since the  amount of information is fixed. One potential solution could be combing with other technologies in the cost of the cost of implementation complexity, e.g., with advanced coding schemes or retransmission schemes to increase the redundancy.} Moreover, in practical system, the available resources are limited. To achieve the optimal solutions, a larger packet size requires more resources, which could be infeasible.

\begin{figure}[!t]
    \centering
\includegraphics[width=0.83\textwidth,trim=0 10 0 0]{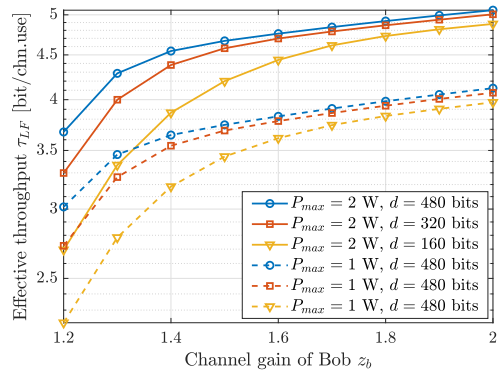}
\caption{Maximized effective throughput $\tau^*_{LF}$ versus Channel gain of Bob $z_b$ with variant maximal transmit power $P_{\max}=\{1, 2\}$ W, as well as data size $d=\{480, 320, 160\}$ bits.}
\label{fig:tau_vs_z_b_Pd}
\end{figure}

Next, we move on to investigate the system performance where we take effective throughput $\tau_{LF}$ as the metric. In  Fig.~\ref{fig:tau_vs_z_b_Pd}, we plot the maximized effective throughput $\tau^*_{LF}$ versus channel gain of Bob $z_b$. To demonstrate the different influences of available resources to $\tau^*_{LF}$ compared with that to LFP $\varepsilon^*_{LF}$, we also vary the maximal available transmit power $P_{\max}=\{1,2\}$ W and packet size $d=\{480, 320, 160\}$ bits. Similar to the LFP performance, increase $z_b$ also improve $\tau^*_{LF}$. However, the improvement becomes less significant when $z_b$ is sufficiently good. This is due to the fact that further improving $z_b$ with the high reliable-secure performance contributes little on the increment of $\tau^*_{LF}=d/m(1-\varepsilon_{LF})$, since it is dominated by $d/m$ when $(1-\varepsilon_{LF})$ approaches to 1. Therefore, 
the system prefers to choose higher transmit power $p$ with less blocklength $m$, in order to increase $\tau_{LF}$. This can be confirmed by comparing $\tau^*_{LF}$ with $P_{\max}=1$ W and $P_{\max}=2$ W. It also implies that the system is also less sensitive to the changes of packet size $d$ when $z_b$ is good enough compared with $z_e$. Under such conditions, it can always trade reliability for security, i.e., with less loss on Bob side for the high loss on Eve side. 

\begin{figure}[!t]
    \centering
\includegraphics[width=0.85\textwidth,trim=0 10 0 0]{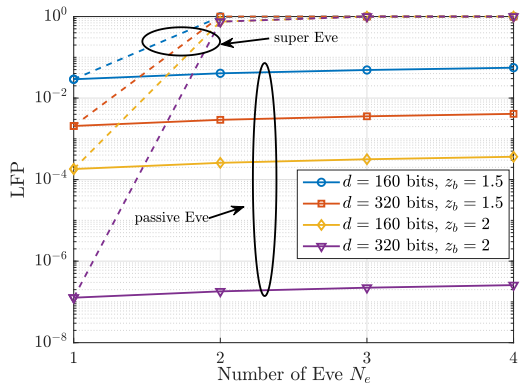}
\caption{Minimized LFP $\varepsilon^*_{LF}$ versus the number of Eve in the network ${N}$ with variant data size $d=\{160, 320\}$ [bit], as well as the channel gain of Bob $z_b=\{1.5, 2\}$. Both passive Eve and super Eve  types are considered. }
\label{fig:LFP_vs_eve}
\end{figure}

In Fig.~\ref{fig:LFP_vs_eve}, we demonstrate the influence of different types of Eve to the system by plotting the minimized LFP $\varepsilon^*_{LF}$ versus the number of Eve in the network ${N}$ in various setups of $d=\{160, 320\}$ bits and $z_b=\{1.5, 2\}$. Specially, we set $z_{e,1}=...=z_{e,{N}}=1$. Although a rise in ${N}$ causes to an increase in $\varepsilon^*_{LF}$ for both types of Eve, its impacts of them on the system performance are significantly different. On one hand, for passive Eve, we can trade the loss of Bob for the loss of each Eve, where the influences of the loss for those Eves are multiplicative with $\prod^{{N}}_{n=1}\varepsilon_{e,n}$. Therefore, ${N}$ plays a less significant role when $\varepsilon^*_{e,n}$ approaches 1. In other words, the system performance with passive Eve is dominated by those Eves with the better channel gains  while the negative effect of ${N}$ can be compromised by giving up a small amount of reliability performance. Therefore, the tolerated ${N}$ is subjected to the Eve with the best channel gain. 
On the other hand, for super Eve, each new Eve in the system introduces a higher channel gain of super Eve in an additive manner. This can not be addressed by the reliability-security tradeoff, since the loss of super Eve could be even less than the loss of Bob with higher ${N}$. In fact, the tolerated ${N}$ highly depends on the actual gap between the channel gain of Bob $z_b$ and the sum of channel gain of Eves $\sum^{{N}}_{n=1}z_{e,n}$.  

\begin{figure}[!t]
    \centering
\includegraphics[width=0.85\textwidth,trim=0 10 0 0]{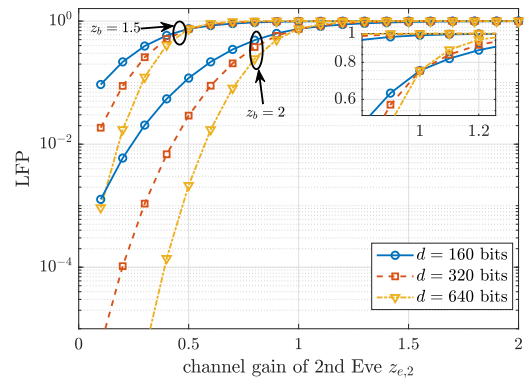}
\caption{Minimized LFP $\varepsilon_{LF}$ with two super Eve versus channel gain of second Eve $z_{e,2}$ while the channel gain of first Eve $z_{e,1}$=1 is fixed. Moreover, we vary the channel gain of Bob $z_b=\{1.5, 2\}$, as well as the packet size $d=\{160,320,640\}$ bits.} 
\label{fig:LFP_vs_sup_eve}
\end{figure}

In order to investigate the system performance for super Eve, we consider a general scenario with two Eves, where the channel gain of one Eve is fixed as $z_{e,1}=1$ while the channel gain of the second Eve $z_{e,2}$ is varied. Fig.~\ref{fig:LFP_vs_sup_eve} shows the minimized LFP $\varepsilon^*_{LF}$ under such a scenario with different setups of packet size $d=\{160,320,640\}$ bits and channel gain of Bob $z_b=\{1.5,2\}$. 
As expected, the rise of $z_{e,2}$ leads to a significant performance drop, i.e., higher $\varepsilon^*_{LF}$, regardless of setups. Similar to Fig.~\ref{fig:LFP_vs_d_mp_vsd}, we can observe that the transmission generally benefits from a larger $d$ in the cost of more allocated resources. This advantage becomes more notable when $z_b$ is elevated. 
However, this conclusion is drawn based on the condition that $z_b$ is sufficiently good so that $z_b\geq \sum^{{N}}_{n=1}z_{e,n}$. If such a condition is not fulfilled, the advantage turns into a disadvantage, with which the reliability-security tradeoff is no longer in favor of Bob (as shown in the zoom-in sub-figure of Fig.~\ref{fig:LFP_vs_sup_eve}).  In those cases, increase $d$ implies the loss of reliable-secure performance. Therefore, the robustness of the system against the super Eve is lower-bounded by $z_b$, which is generally difficult to be improved for the broadcasting channel in machine-type communications. This observation also shows the importance of other signal enhancement technologies, e.g., beam-forming technologies or non-orthogonal multiple access schemes.  

\section{Conclusion}
In this work, we investigated the approach to enhance the reliable-secure performance for machine-type communications via the resource allocation scheme in finite blocklength regime. To more accurately characterize the concept of trading reliability for security, we defined the leakage-failure probability, as the metric that jointly considering the information leakage probability and decoding error probability. We showed the relationship between such a metric and conventional secrecy transmission rate. Interestingly, we revealed that the reliable-secure performance can be enhanced by counter-intuitively allocating fewer resources for the short-packet transmission. 
In view of this, we formulated the optimization problem and provided an optimization framework accordingly. In particular, we proposed lower-bounded approximations for decoding error probability with FBL codes. Then, we established the joint convexity of those approximations in blocklength and transmit power, which can also be applied for the original error probability expression. Based on those analytical findings, we proposed an iterative search algorithm that can obtain solutions for the optimization problem, which are later shown that being able to achieve the nearly optimal solutions. With an emphasis on the extendability, we further discussed applications of our proposed optimization framework for other practical scenarios, including networks with multiple eavesdroppers and blocklength allocation with assumptions of high reliable-secure threshold. 
Via numerical simulations, we validate the performance of our proposed approaches and demonstrated the tradeoff between reliability and security while showing the influence of channel gain, data size and number of   eavesdroppers to the system. 

{ Finally, we are motivated to highlight the generality of the concept \emph{trading reliability for security}, as well as the proposed leakage-failure probability. Although we discussed this concept with the classic three-node setup, it is still available for other setups including multiple antennas, multiple access, as well as multi-functional networks. By adopting the corresponding setup combing with other practical constraints, e.g., resource budget, individual requirement of reliability or security, the leakage-failure probability can also be applied to characterize the reliable-secure performance.}



\appendices
\section{Proof of Lemma~\ref{lemma:mono}}
\label{app:lemma_mono}
\begin{proof}
Define an auxiliary function $\omega$, where $\omega(m,p)={\sqrt {\frac{m}{V(\gamma)}} ( {{\mathcal{C}}(\gamma ) \!-\! \frac{d}{m})} \ln2} $. 
It is trivial to show that $Q(\omega)=\int^\infty_\omega \frac{1}{\sqrt{2\pi}}e^{-\frac{t^2}{2}}dt$ is monotonically decreasing in $\omega$, i.e., $\frac{\partial \varepsilon}{\partial \omega}\leq 0$. Then, the monotonicity with respect to $m$ is straightforward, which can be shown via derivative test, i.e.,  
    \begin{equation}
    \frac{\partial \varepsilon}{\partial m} =\frac{\partial \varepsilon}{\partial \omega}
    \frac{\ln2}{2} m^{-\frac{1}{2}} V^{-\frac{1}{2}}(\mathcal{C}+m^{-1}d) \leq 0, 
        \end{equation}
On the other hand, the partial derivative of $\varepsilon$ with respect to $p$ is given by
\begin{equation}
\begin{split}
    \label{eq:dw_dgamma}
    \frac{\partial \varepsilon}{\partial p}=&
    \frac{\partial \varepsilon}{\partial \omega}\frac{\partial \omega}{\partial \gamma}\frac{\partial \gamma}{\partial p}\\
    =&\underbrace{|h|^2\frac{m^{\frac{1}{2}} V^{-\frac{1}{2}}}{\sigma^2\left(\gamma^{2}+2 \gamma\right)(\gamma+1)}}_{\geq 0}
    \underbrace{\left(\gamma^{2}+2 \gamma-\ln (\gamma+1)\right)}_{\Delta_1}\\
        &+\underbrace{m^{-\frac{1}{2}} V^{-\frac{3}{2}} d \ln 2 \frac{1}{(1+\gamma)^{3}}}_{\geq 0},
\end{split}
\end{equation}
where $\Delta_1(\gamma)=\gamma^{2}+2 \gamma-\ln (\gamma+1)$ is an auxiliary function with respect to $\gamma$. 
According to~\eqref{eq:dw_dgamma}, if $\Delta_1$ is non-negative, it holds that $\frac{\partial w}{\partial \gamma}\geq 0$. 
This can be proven by showing that $\Delta_1$ is monotonically increasing in $\gamma>0$ with $\frac{\partial\Delta_1}{\partial \gamma}=2\gamma+2-\frac{1}{\gamma+1}\geq 0$, i.e.,
\begin{equation}
    \Delta_1(\gamma)\geq \Delta_1(0)=0.
\end{equation}
Therefore, $\varepsilon$ is also monotonically decreasing in $p$. Immediately, we have that the monotonicity of $\delta=1-\varepsilon$ is reversed, i.e., increasing in $m$ and $p$. 
\end{proof}

\section{Proof of Lemma~\ref{lemma:Q_app}}
\label{app:lemma_app}
\begin{proof}
We first define the functions $g(\omega)$ and $h(\omega)$:
\begin{equation}
    \label{eq:lemma_g_leq_0}
    g(\omega)=Q(\omega)-b(\hat{\omega})e^{-a(\hat{\omega})\omega)}-c(\hat{\omega}),
\end{equation}
\begin{equation}
    \label{eq:lemma_h_leq_0}
    h(\omega)=Q(\omega)-1+b(-\hat{\omega})e^{-a(-\hat{\omega})\omega)}+c(-\hat{\omega}).
\end{equation}
Recall that $\hat{\omega}$ is a constant and $\omega(m,P)={\sqrt {\frac{m}{V(\gamma)}} ( {{\mathcal{C}}(\gamma ) \!-\! \frac{d}{m})} \ln2}$ is an auxiliary function with respect to blocklength $m$ and transmit power $p$. Therefore, Lemma~\ref{lemma:Q_app} can be proven by showing that $g(\omega)\leq 0$ and $h(\omega)\leq 0$ hold for arbitrary $m$ and $p$, i..e, an arbitrary $\omega$. In particular, the derivative of $g(\omega)$ with respect  to $\omega$ is given by 
\begin{equation}
\begin{split}
    \frac{\partial g}{\partial \omega}&=-\frac{1}{\sqrt{2\pi}}e^{-\frac{\omega^2}{2}}+a b e^{-a\omega}\\
    &=\frac{1}{\sqrt{2\pi}}
        \left(
            e^{a\hat{\omega}-\frac{\hat{\omega}^2}{2}-a\omega}
            -e^{-\frac{\omega^2}{2}}
        \right).
    \end{split}
\end{equation}
It is trivial to show that exponential function $e^x$ is a monotonically increasing function with respect to $x$. Therefore, since $\frac{\partial g}{\partial \omega}$ is the sum of two exponential functions, it holds that
\begin{equation}
    \text{sign}(\frac{\partial g}{\partial \omega})=\text{sign}(g_{\text{sgn}}(\omega)),
\end{equation}
where $g_{\text{sgn}}(\omega)=\frac{\omega^2}{2}-a\omega+(a\hat{\omega}-\frac{\hat{\omega}^2}{2})$ and $\text{sign}(\cdot)$ is the sign function with
\begin{equation}
    \text{sign}(x)=
    \begin{cases}
    1, & \text{if $x>0$,}\\
    0, & \text{if $x=0$,}\\
    -1, & \text{if $x<0$.}
    \end{cases}
\end{equation}
Since $g_{\text{sgn}}(\omega)$ is a quadratic polynomial function with respect to $\omega$ and $a(\omega)\geq 0$, there exists two  roots, which can be written as $\omega_1=\hat{\omega}$ and $\omega_2=2(a-\hat{\omega})$ so that $\omega_2-\omega_1=2a-2\hat \omega\geq 0$ according to~\eqref{eq:le2_a}. Therefore, the two roots are always real and 
\begin{equation}
\begin{split}
 \text{sign}(\frac{\partial g}{\partial \omega})
 =&\text{sign}(g_{\text{sgn}}(\omega))\\
&\begin{cases}
    <0,  &\text{if $\omega\in (\omega_1,\omega_2)$}, \\
    \geq0, & \text{if $\omega\in (-\infty, \omega_1]\cup[\omega_2,\infty)$}.
\end{cases}
\end{split}
\end{equation}
In other words, $g(\omega)$ is increasing in $\omega \in (-\infty,\omega_1]$, decreasing in $\omega \in (\omega_1,\omega_2)$, and then increasing again in $\omega \in [\omega_2,\infty)$, i.e., the maximum of $g(\omega)$ at $\omega=\omega_1=\hat{\omega}$ or $\omega\to\infty$. Clearly, we have $g(\hat{\omega})=0$. Moreover, according to~\eqref{eq:le2_a} and~\eqref{eq:le2_c}, it holds that
\begin{equation}
\begin{split}
    \lim_{\omega\to\infty}g(\omega)=&Q(\infty)-be^{-a\cdot \infty}-c\\
    =&1-1\frac{1}{\sqrt{2\pi}a(\hat{\omega})}-Q(\hat{\omega})< 0.
\end{split}
\end{equation}
As a result, we can deduce that
\begin{equation}
    g(\omega)\leq \max\{g(\hat{\omega}),\lim_{\omega\to\infty}g(\omega)\}\leq 0,\ \forall \omega\in\mathbb{R}. 
\end{equation}
Similarly, we can prove that $h(\omega)\leq \max \{h(-\hat{\omega}),\lim_{\omega\to-\infty}h(\omega)\}\leq 0,\ \forall \omega\in\mathbb{R}$ with $Q(x)=1-Q(-x)$ and by replacing $\hat{\omega}$ with $-\hat{\omega}$. The rest of the proof follows the above methodology. To avoid repetition, we omit the details. 
\end{proof}

\section{Proof of Lemma~\ref{lemma:convex}}
\label{app:lemma_convex}
\begin{proof}
Note that the SNR $\gamma=\frac{p|h|^2}{\sigma^2}$ is linear in the transmit power $p$. Therefore, the concavity/convexity of $\omega$ with respect to $\gamma$ is equivalent to the concavity/convexity with respect to $p$. To ease the notation, in the following, we investigate the joint concavity of $\omega$ with respect to $m$ and $\gamma$. In particular, according to Sylvester's criterion~\cite{Boyd_2004_convex}, $\omega$ can be proven as concave, if the first leading principal minor of the Hessian matrix $\mathbf{H}$, which is the second derivative $\frac{\partial^2 \omega}{\partial m^2}$, is non-negative and the determinate of Hessian matrix $\det \mathbf{H}$ is also non-negative, where 
\begin{equation} \label{eqA.2.1}
    \mathbf{H}=\left(\begin{array}{cc}
    \frac{\partial^{2} \omega}{\partial m^{2}} & \frac{\partial^{2} \omega}{\partial m \partial \gamma} \\
    \frac{\partial^{2} \omega}{\partial \gamma \partial m} & \frac{\partial^{2} w}{\partial \gamma^{2}}
    \end{array}\right)
\end{equation}
and
\begin{equation}
    \det(\mathbf{H})=\frac{\partial^2 w}{\partial m^2} \frac{\partial^2 \omega}{\partial \gamma^2}-\left(\frac{\partial^2 \omega}{\partial m \partial \gamma}\right)^2.
\end{equation}
Subsequently, we investigate each component in matrix $\mathbf{H}$. On the one hand, for the partial derivative of $w$ with respect to $m$, we have
\begin{equation}
\begin{aligned}
\label{eq:d2w_dm2}
    \frac{\partial w}{\partial m} &=\frac{\ln2}{2} m^{-\frac{1}{2}} V^{-\frac{1}{2}}(\mathcal{C}+m^{-1}d) \geq 0, \\
    \frac{\partial^2 w}{\partial m^2} &=-\frac{\ln2}{4} m^{-\frac{3}{2}} V^{-\frac{1}{2}}(\mathcal{C}+3m^{-1}d) \leq 0,
\end{aligned}  
\end{equation}
which indicates that $\omega$ is increasing and concave in $m$. 
Next, we move on to the second-order derivative, which can be written as
\begin{equation}
\begin{split}
    \label{eq:d2w_dgamma2}
    &\frac{\partial^2 w}{\partial\gamma^2}=\frac{m^{\frac{1}{2}}}{(\gamma(\gamma+2))^{\frac{5}{2}}}\\
    &\cdot\Big(
        \underbrace{-(\gamma\!+\!1)^3\!+\!\frac{1}{\gamma\!+\!1}\!+\!3(\gamma\!+\!1)\ln(\gamma\!+\!1)}_{\Delta_{2}}
        -\underbrace{3\ln 2(\gamma\!+\!1)\frac{d}{m}}_{\leq 0}
    \Big),
\end{split}
\end{equation} 
where $\Delta_2(\gamma)$ is a monotonically decreasing function, which can be shown via
\begin{equation}
\begin{aligned}
    \frac{\partial \Delta_{2}(\gamma)}{\partial \gamma} &=-3(\gamma+1)^{2}-\frac{1}{(\gamma+1)^{2}}+3+3 \ln (\gamma+1)\\ &\leq-3(\gamma+1)^{2}-\frac{1}{(\gamma+1)^{2}}+3+3 \gamma \\
    &=-3(\gamma+1) \gamma-\frac{1}{(\gamma+1)^{2}} \leq 0.
\end{aligned}
\end{equation}
Therefore, it holds that $\Delta_2(\gamma)\leq \Delta_2(0)=0$, i.e., 
\begin{equation}
    \frac{\partial^{2} \omega}{\partial \gamma^{2}}\leq\frac{m^{\frac{1}{2}}}{(\gamma(\gamma+2))^{\frac{5}{2}}}\left(0-3 \ln 2(\gamma+1) \frac{d}{m}\right) \leq 0.
\end{equation}
Hence, $\omega$ is increasing and concave in $\gamma$. Finally, we derive the expression of the last two elements in $\mathbf{H}$, which can be written as 
\begin{equation}
    \label{eq:dw_dm_dgamma}
    \frac{\partial^2 w}{\partial m \partial \gamma}=\frac{\partial^2 w}{\partial \gamma \partial m}=\frac{m^{-\frac{1}{2}} \ln 2}{2\gamma^{\frac{1}{2}}(\gamma+2)^{\frac{1}{2}}}(-\frac{\mathcal{C}(\gamma)+r}{\gamma^{2}+2 \gamma}+\frac{1}{\ln 2}).
\end{equation}
Combing~\eqref{eq:d2w_dm2},~\eqref{eq:d2w_dgamma2} and~\eqref{eq:dw_dm_dgamma}, we reformulate $\det \mathbf{H}$ via grouping in terms of $r=\frac{d}{m}$ as shown in~\eqref{eq:hessian_detail}
\begin{figure*}
\begin{equation}
\label{eq:hessian_detail}
\begin{split}
    \det \mathbf{H}
    =&\underbrace{\frac{m^{-1}(\ln2)^2}{\gamma^2+2 \gamma}}_{\geq 0}
    \left(\underbrace{
        \frac{(\gamma^2+2 \gamma)C\ln2(4-3\ln2)+(\gamma^2+2 \gamma)^2(C\ln2-1)-4C^2(\ln2)^2}{4(\gamma^2+2 \gamma)^2(\ln2)^2}
    }_{\Delta_c} \right.\\
    &\left.+\underbrace{
        \left(
            \frac{(\gamma^2+2 \gamma)(6(\gamma^2+2 \gamma)+8)-(3(\gamma^2+2 \gamma)+8)C\ln2}{4(\gamma^2+2 \gamma)^2\ln2}
        \right)
    }_{\Delta_b\geq 0}r\right.
    \left.+\underbrace{
            \left(
                \frac{8+9(\gamma^2+2 \gamma)}{4(\gamma^2+2 \gamma)^2}
            \right)
        }_{\Delta_a\geq 0}r^2
    \right)\\
    =&\frac{m^{-1}(\ln2)^2}{\gamma^2+2 \gamma}h(r),
\end{split}
\end{equation}
\hrulefill
\end{figure*}
where $h(r)=\Delta_a r^2+\Delta_b r +\Delta_c$ is a quadratic polynomial with respect to $r$. Clearly, we have $\text{sign}(\det \mathbf{H})=\text{sign}(h(r))$. Since $\gamma\geq 0$, it is straightforward to determine $\Delta_a\geq 0$ and $\Delta_b\geq 0$. 
It implies that $h(r)$ has either no real roots or one positive root at maximum while the negative root can be discarded as $r\geq 0$. Hence, $\det \mathbf{H}$ is non-negative,  if it holds  $r\geq \frac{-\Delta_b+\sqrt{\Delta^2_b-4\Delta_a\Delta_c}}{2\Delta_a}$. As a result, $\omega$ is jointly concave in $m$ and $p$. \end{proof}

\bibliographystyle{IEEEtran.bst}
\bibliography{reference_secure}

\begin{thebibliography}{10}
\providecommand{\url}[1]{#1}
\csname url@samestyle\endcsname
\providecommand{\newblock}{\relax}
\providecommand{\bibinfo}[2]{#2}
\providecommand{\BIBentrySTDinterwordspacing}{\spaceskip=0pt\relax}
\providecommand{\BIBentryALTinterwordstretchfactor}{4}
\providecommand{\BIBentryALTinterwordspacing}{\spaceskip=\fontdimen2\font plus
\BIBentryALTinterwordstretchfactor\fontdimen3\font minus
  \fontdimen4\font\relax}
\providecommand{\BIBforeignlanguage}[2]{{%
\expandafter\ifx\csname l@#1\endcsname\relax
\typeout{** WARNING: IEEEtran.bst: No hyphenation pattern has been}%
\typeout{** loaded for the language `#1'. Using the pattern for}%
\typeout{** the default language instead.}%
\else
\language=\csname l@#1\endcsname
\fi
#2}}
\providecommand{\BIBdecl}{\relax}
\BIBdecl

\bibitem{Bennis_URLLC_intro_2018}
M.~Bennis, M.~Debbah, and H.~V. Poor, ``{Ultra-reliable and Low-Latency
  Wireless Communication: Tail, Risk, and Scale},'' \emph{Proceedings of the
  IEEE}, vol. 106, no.~10, pp. 1834--1853, 2018.

\bibitem{She_URLLC_intro_2021}
C.~She, C.~Sun, Z.~Gu, Y.~Li, C.~Yang, H.~V. Poor, and B.~Vucetic, ``{A
  Tutorial on Ultrareliable and Low-Latency Communications in 6G: Integrating
  Domain Knowledge Into Deep Learning},'' \emph{Proceedings of the IEEE}, vol.
  109, no.~3, pp. 204--246, 2021.

\bibitem{She_URLLC_resource_2017}
C.~She, C.~Yang, and T.~Q.~S. Quek, ``{Radio Resource Management for
  Ultra-Reliable and Low-Latency Communications},'' \emph{IEEE Communications
  Magazine}, vol.~55, no.~6, pp. 72--78, 2017.

\bibitem{Durisi_shortpacket_2016}
G.~Durisi, T.~Koch, and P.~Popovski, ``{Toward Massive, Ultrareliable, and
  Low-Latency Wireless Communication With Short Packets},'' \emph{Proceedings
  of the IEEE}, vol. 104, no.~9, pp. 1711--1726, 2016.

\bibitem{Polyanskiy_2010}
Y.~Polyanskiy, H.~V. Poor, and S.~Verdu, ``{Channel Coding Rate in the Finite
  Blocklength Regime},'' \emph{IEEE Trans. Inf. Theory}, vol.~56, no.~5, pp.
  2307--2359, 2010.

\bibitem{Hu_2015}
Y.~{Hu}, J.~{Gross}, and A.~{Schmeink}, ``{On the Capacity of Relaying With
  Finite Blocklength},'' \emph{IEEE Transactions on Vehicular Technology},
  vol.~65, no.~3, pp. 1790--1794, 2016.

\bibitem{Ding_NOMA}
Z.~Xiang, W.~Yang, Y.~Cai, Z.~Ding, Y.~Song, and Y.~Zou, ``{NOMA-Assisted
  Secure Short-Packet Communications in IoT},'' \emph{IEEE Wireless
  Communications}, vol.~27, no.~4, pp. 8--15, 2020.

\bibitem{Alcaraz_WP_linear_approx_app2}
O.~L. Alcaraz~Lopez, E.~M.~G. Fernandez, R.~D. Souza, and H.~Alves, ``{Wireless
  Powered Communications With Finite Battery and Finite Blocklength},''
  \emph{IEEE Transactions on Communications}, vol.~66, no.~4, pp. 1803--1816,
  2018.

\bibitem{Singh_2021_EE_V_1}
K.~Singh, M.-L. Ku, and M.~F. Flanagan, ``{Energy-Efficient Precoder Design for
  Downlink Multi-User MISO Networks With Finite Blocklength Codes},''
  \emph{IEEE Transactions on Green Communications and Networking}, vol.~5,
  no.~1, pp. 160--173, 2021.

\bibitem{Wang_security_6G_2020}
M.~Wang, T.~Zhu, T.~Zhang, J.~Zhang, S.~Yu, and W.~Zhou, ``{Security and
  privacy in 6G networks: New areas and new challenges},'' \emph{Digital
  Communications and Networks}, vol.~6, no.~3, pp. 281--291, 2020.

\bibitem{Granjal_Iot_security_2015}
J.~Granjal, E.~Monteiro, and J.~Sá~Silva, ``{Security for the Internet of
  Things: A Survey of Existing Protocols and Open Research Issues},''
  \emph{IEEE Communications Surveys \& Tutorials}, vol.~17, no.~3, pp.
  1294--1312, 2015.

\bibitem{Hassija_Iot_security_2019}
V.~Hassija, V.~Chamola, V.~Saxena, D.~Jain, P.~Goyal, and B.~Sikdar, ``{A
  Survey on IoT Security: Application Areas, Security Threats, and Solution
  Architectures},'' \emph{IEEE Access}, vol.~7, pp. 82\,721--82\,743, 2019.

\bibitem{Wu_PLS_2018}
Y.~Wu, A.~Khisti, C.~Xiao, G.~Caire, K.-K. Wong, and X.~Gao, ``{A Survey of
  Physical Layer Security Techniques for 5G Wireless Networks and Challenges
  Ahead},'' \emph{IEEE Journal on Selected Areas in Communications}, vol.~36,
  no.~4, pp. 679--695, 2018.

\bibitem{Schaefer_PLS_2017}
\BIBentryALTinterwordspacing
H.~V. Poor and R.~F. Schaefer, ``{Wireless Physical Layer Security},''
  \emph{Proceedings of the National Academy of Sciences}, vol. 114, no.~1, pp.
  19--26, 2017. [Online]. Available:
  \url{https://www.pnas.org/doi/abs/10.1073/pnas.1618130114}
\BIBentrySTDinterwordspacing

\bibitem{Sun_PLS_UAV_2019}
X.~Sun, D.~W.~K. Ng, Z.~Ding, Y.~Xu, and Z.~Zhong, ``{Physical layer security
  in UAV systems: Challenges and opportunities},'' \emph{IEEE Wireless
  Communications}, vol.~26, no.~5, pp. 40--47, 2019.

\bibitem{Xiang_PLS_congnitive_2019}
Z.~Xiang, W.~Yang, G.~Pan, Y.~Cai, and Y.~Song, ``{Physical layer security in
  cognitive radio inspired NOMA network},'' \emph{IEEE Journal of Selected
  Topics in Signal Processing}, vol.~13, no.~3, pp. 700--714, 2019.

\bibitem{arfaoui_PLS_visible_2020}
M.~A. Arfaoui, M.~D. Soltani, I.~Tavakkolnia, A.~Ghrayeb, M.~Safari, C.~M.
  Assi, and H.~Haas, ``{Physical layer security for visible light communication
  systems: A survey},'' \emph{IEEE Communications Surveys \& Tutorials},
  vol.~22, no.~3, pp. 1887--1908, 2020.

\bibitem{Yang_wiretap_2019}
W.~Yang, R.~F. Schaefer, and H.~V. Poor, ``{Wiretap Channels: Nonasymptotic
  Fundamental Limits},'' \emph{IEEE Transactions on Information Theory},
  vol.~65, no.~7, pp. 4069--4093, 2019.

\bibitem{Wang_PLS_convert_2022}
C.~Wang, Z.~Li, H.~Zhang, D.~W.~K. Ng, and N.~Al-Dhahir, ``{Achieving
  Covertness and Security in Broadcast Channels With Finite Blocklength},''
  \emph{IEEE Transactions on Wireless Communications}, vol.~21, no.~9, pp.
  7624--7640, 2022.

\bibitem{Wang_PLS_throughput_2019}
H.-M. Wang, Q.~Yang, Z.~Ding, and H.~V. Poor, ``{Secure Short-Packet
  Communications for Mission-Critical IoT Applications},'' \emph{IEEE
  Transactions on Wireless Communications}, vol.~18, no.~5, pp. 2565--2578,
  2019.

\bibitem{Wang_PLS_throughput_UAV_2020}
Y.~Wang, X.~Zhou, Z.~Zhuang, L.~Sun, Y.~Qian, J.~Lu, and F.~Shu, ``{UAV-Enabled
  Secure Communication With Finite Blocklength},'' \emph{IEEE Transactions on
  Vehicular Technology}, vol.~69, no.~12, pp. 16\,309--16\,313, 2020.

\bibitem{Wei_PLS_throughput_FD_2021}
L.~Wei, Y.~Yang, and B.~Jiao, ``{Secrecy Throughput in Full-Duplex Multiuser
  MIMO Short-Packet Communications},'' \emph{IEEE Wireless Communications
  Letters}, vol.~10, no.~6, pp. 1339--1343, 2021.

\bibitem{Feng_outage_2022}
C.~Feng, H.-M. Wang, and H.~V. Poor, ``{Reliable and Secure Short-Packet
  Communications},'' \emph{IEEE Transactions on Wireless Communications},
  vol.~21, no.~3, pp. 1913--1926, 2022.

\bibitem{Sun_2017_MM}
Y.~Sun, P.~Babu, and D.~P. Palomar, ``{Majorization-Minimization Algorithms in
  Signal Processing, Communications, and Machine Learning},'' \emph{IEEE
  Transactions on Signal Processing}, vol.~65, no.~3, pp. 794--816, 2017.

\bibitem{Scutari_2014_SCA}
G.~Scutari, F.~Facchinei, P.~Song, D.~P. Palomar, and J.-S. Pang,
  ``{Decomposition by Partial Linearization: Parallel Optimization of
  Multi-Agent Systems},'' \emph{IEEE Transactions on Signal Processing},
  vol.~62, no.~3, pp. 641--656, 2014.

\bibitem{Zhu_convex_2022}
Y.~Zhu, Y.~Hu, X.~Yuan, M.~C. Gursoy, H.~V. Poor, and A.~Schmeink, ``{Joint
  Convexity of Error Probability in Blocklength and Transmit Power in the
  Finite Blocklength Regime},'' \emph{IEEE Transactions on Wireless
  Communications}, 2022, early access.

\bibitem{Chem_2015}
Y.~Polyanskiy, H.~V. Poor, and S.~Verdu, ``{Dispersion of Gaussian Channels},''
  in \emph{2009 IEEE ISIT}, 2009, pp. 2204--2208.

\bibitem{Wyner_1975_wire}
A.~D. Wyner, ``{The Wire-tap Channel},'' \emph{Bell system technical journal},
  vol.~54, no.~8, pp. 1355--1387, 1975.

\bibitem{Lancho_why_normal_approx}
A.~Lancho, J.~Ostman, G.~Durisi, T.~Koch, and G.~Vazquez-Vilar, ``{Saddlepoint
  Approximations for Short-Packet Wireless Communications},'' \emph{IEEE
  Transactions on Wireless Communications}, vol.~19, no.~7, pp. 4831--4846,
  2020.

\bibitem{Boyd_2004_convex}
S.~Boyd and L.~Vandenberghe, \emph{{Convex optimization}}.\hskip 1em plus 0.5em
  minus 0.4em\relax Cambridge university press, 2004.

\bibitem{kapetanovic_passive_eve_2015}
D.~Kapetanovic, G.~Zheng, and F.~Rusek, ``{Physical layer security for massive
  MIMO: An overview on passive eavesdropping and active attacks},'' \emph{IEEE
  Communications Magazine}, vol.~53, no.~6, pp. 21--27, 2015.

\bibitem{ahn_MRC_2009}
K.~S. Ahn and R.~W. Heath, ``{Performance analysis of maximum ratio combining
  with imperfect channel estimation in the presence of cochannel
  interferences},'' \emph{IEEE Transactions on Wireless Communications},
  vol.~8, no.~3, pp. 1080--1085, 2009.

\bibitem{Yu_2006_dual}
W.~Yu and R.~Lui, ``{Dual Methods for Nonconvex Spectrum Optimization of
  Multicarrier Systems},'' \emph{IEEE Trans. Commun.}, vol.~54, no.~7, pp.
  1310--1322, 2006.

\bibitem{Zhu_2022_NOMA}
Y.~Zhu, X.~Yuan, Y.~Hu, T.~Wang, M.~Cenk~Gursoy, and A.~Schmeink,
  ``{Low-Latency Hybrid NOMA-TDMA: QoS-Driven Design Framework},'' \emph{IEEE
  Transactions on Wireless Communications}, pp. 1--1, 2022.

\end{thebibliography}
\end{document}